\newtheorem{theorem}{Theorem}
\newtheorem{lemma}{Lemma}
\newtheorem{definition}{Definition}
\newtheorem{proposition}{Proposition}
\theoremstyle{plain}
\newtheorem{problem}{Problem}
\newenvironment{fminipage}%
{\begin{Sbox}\begin{minipage}}%
		{\end{minipage}\end{Sbox}\fbox{\TheSbox}}
\def\ceil#1{\left\lceil #1 \right\rceil}
\def\defeq{\stackrel{\mathrm{def}}{=}}
\newcommand{\Solver}{\emph{Estimator}}
\newcommand{\wmax}{w_{{\max}}}
\newcommand{\wmin}{w_{{\min}}}
\newcommand\aaa{\boldsymbol{\mathit{a}}}
\newcommand\bb{\boldsymbol{\mathit{b}}}
\newcommand\ee{\boldsymbol{\mathit{e}}}
\newcommand\uu{\boldsymbol{\mathit{u}}}
\newcommand\vv{\boldsymbol{\mathit{v}}}
\newcommand\yy{\boldsymbol{\mathit{y}}}
\newcommand\XXtil{\boldsymbol{\mathit{\tilde{X}}}}
\newcommand\YYtil{\boldsymbol{\mathit{\tilde{Y}}}}
\newcommand\AAA{\boldsymbol{\mathit{A}}}
\newcommand\BB{\boldsymbol{\mathit{B}}}
\newcommand\DD{\boldsymbol{\mathit{D}}}
\newcommand\II{\boldsymbol{\mathit{I}}}
\newcommand\JJ{\boldsymbol{\mathit{J}}}
\newcommand\LL{\boldsymbol{\mathit{L}}}
\newcommand\QQ{\boldsymbol{\mathit{Q}}}
\newcommand\PP{\boldsymbol{\mathit{P}}}
\newcommand\RR{\boldsymbol{\mathit{R}}}
\newcommand\SSS{\boldsymbol{\mathit{S}}}
\newcommand\UU{\boldsymbol{\mathit{U}}}
\newcommand\WW{\boldsymbol{\mathit{W}}}
\newcommand\XX{\boldsymbol{\mathit{X}}}
\newcommand\YY{\boldsymbol{\mathit{Y}}}
\newcommand{\rea}{\mathbb{R}}
\newcommand{\OM}{\mathbf{\Omega}}
\newcommand\calG{\mathcal{G}}
\newcommand\Otil{\widetilde{O}}
\newcommand{\kh}[1]{\left(#1\right)}
\def\abs#1{\left|#1  \right|}
\def\norm#1{\| #1 \|}
\def\calG{\mathcal{G}}
\def\eps{\epsilon}
\newcommand{\one}{\mathbf{1}}
\def\trace#1{\mathrm{Tr} \left(#1 \right)}
\newcommand{\SpGreedy}{\emph{Greedy}}
\newcommand{\FastGreedy}{\emph{FastGreedy}}
\newcommand{\ApproxC}{\emph{ApproxC}}
\DeclareMathOperator*{\argmax}{arg\,max}
\begin{document}

\title{Measures and Optimization for Robustness and Vulnerability in Disconnected Networks}
\author{Liwang~Zhu, Qi~Bao, ~Zhongzhi~Zhang,~\IEEEmembership{Member,~IEEE}
\thanks{The work was supported by the Shanghai Municipal Science and Technology Major Project  (No. 2018SHZDZX01), the National Natural Science Foundation of China (No. U20B2051),   ZJ Lab, and Shanghai Center for Brain Science and Brain-Inspired Technology. \emph{(Corresponding author: Zhongzhi Zhang.)}} 
\thanks{The authors are with the Shanghai Key Laboratory of Intelligent Information Processing, School of Computer Science, Fudan University, Shanghai 200433, China. Zhongzhi~Zhang is also with the Research Institute of Intelligent Complex Systems, Fudan University, Shanghai 200433, China, and with the Shanghai Engineering Research Institute of Blockchain, Shanghai 200433, China.   (e-mail: 19210240147@fudan.edu.cn, 20110240002@fudan.edu.cn, zhangzz@fudan.edu.cn).}
}


\maketitle

\begin{abstract}

The function or performance of a network is strongly dependent on its robustness, quantifying the ability of the network to continue functioning under perturbations. While a wide variety of robustness metrics have been proposed, they have their respective limitations. In this paper, we propose to use the forest index as a measure of network robustness, which overcomes the deficiencies of existing metrics. Using such a measure as an optimization criterion, we propose and study the problem of breaking down a network by attacking some key edges. We show that the objective function of the problem is monotonic but not submodular, which impose more challenging on the problem. We thus resort to greedy algorithms extended for non-submodular functions by iteratively deleting the most promising edges. We first propose a simple greedy algorithm with a proved bound for the approximation ratio and cubic-time complexity. To confront the computation challenge for large networks, we further propose an improved nearly-linear time greedy algorithm, which significantly speeds up the process for edge selection but sacrifices little accuracy. Extensive experimental results for a large set of real-world networks verify the effectiveness and efficiency of our algorithms, demonstrating that our algorithms outperform several baseline schemes.

\end{abstract}

\begin{IEEEkeywords}
Network robustness, graph vulnerability,  forest index, edge attack, robustness manipulation, edge centrality.
\end{IEEEkeywords}
\IEEEpeerreviewmaketitle
\section{Introduction}

\IEEEPARstart{N}{etworks} or graphs are a powerful tool to describe a large variety of real systems, such as power grids and computer networks, criminal organizations, and terrorist groups. Most realistic networks are often subject to natural failures or malicious attacks, which can lead to a corrosive and detrimental risk to the functioning of societies, with regard to costs, security and disruption~\cite{WaFeKoMa19,HiBaSa09}. For example, in 2017  the ‘WannaCry’ ransomware attack on  NHS network of U.K. eventually infected around 230,000 computers over 150 countries and caused damages of billions of dollars~\cite{Tr21}. Other examples include breakdowns in power grids or water supply networks, equipment failures in communicating networks, traffic and congestion in transportation networks, and so on. Thus, it is of paramount importance for a network to continue functioning when some of its components fail. This ability can be reflected by the robustness and vulnerability of a network, which has become a fundamental subject of many studies in the past years~\cite{ElKo13,OeFa21}.

 
One of the main concerns in the research of network robustness and vulnerability is to find or establish a proper measure to quantify network robustness, which constitutes the basis to design, optimize, or control the robustness and vulnerability of a network, in order to achieve a given goal.  In the past decades, numerous robustness measures have been proposed based on partial or global information. For example, clustering coefficient~\cite{FiClFeViDo04} and betweenness centrality~\cite{Ne05} are two robustness measures using partial network information, with betweenness centrality only considering the shortest paths, but neglecting the contributions of other paths, even the second shortest paths. To better describe network robustness, some measures capturing global structure are presented, such as current-flow closeness centrality based on random walks~\cite{Ne05} and network criticality based on electrical networks~\cite{BrFl05}, both of which consider the contributions of all paths with various lengths. 

Another major concern within the field of network robustness is to optimize and control the robustness through targeted interventions on network structure. In some realistic scenarios, a less robust network might be preferred, which is termed as ‘network destruction’ in~\cite{KoIsBa15}. Here, we particularly cope with the problem of breaking a given network by identifying a set of critical links, whose removal has the most significant impact on the network robustness. Note that similar strategy of deleting edges has been exploited to various practical applications related to robustness for different purposes, including weakening  terrorist networks~\cite{AzGaNa17}, blocking rumors or misinformation in social networks~\cite{ZhAdSaVuPr16}, containing the spread of  epidemic~\cite{GaMaTo05,PaCaVaVe15,HaZh17} and preventing the propagation of computer virus in computer networks~\cite{KeWh91,KeWh93,JaCr12,WaNiZhLiNi16}, among others. Thus far, the robustness optimization problems by edge removal have been still an active research subject~\cite{YiShPa22}. 

The vast majority of existing robustness metrics of a network are introduced or designed for connected networks, which often do not apply to disconnected networks or have some limitations when applied to disconnected networks~\cite{OeFa21}. As is known to us all, many real-life networks are not connected, with frequently cited examples inducing Mobile Ad hoc Networks~\cite{DaHa07} and protein-protein interaction networks~\cite{HuLiWu18}. It is thus of theoretical and practical interest to introduce a reliable metric to characterize the robustness of disconnected graphs. On the other hand, previous related optimization algorithms for optimally selecting edges to be removed no longer work for disconnected graphs. It is significantly important to design an effective and computationally cheaper approach to optimizing the robustness of a disconnected graph based on a desirable robustness measure. These two motivations inspire us to carry out this work.

Concretely, in this paper, we first propose to use the forest index~\cite{ChSh97,ChSh98,Me98,Ch08} as a reliable robustness measure for a network $\calG$, connected or disconnected. A smaller forest index corresponds to a  more robust network, and vice versa. Then, we address the following optimization problem: given an undirected graph $\calG$ with $n$ nodes and $m$ edges, and an integer $k$, how to strategically select a set $S$ of $k$ edges to delete, so that the forest index of the resulting graph is maximized.   Our main contributions are summarized as follows.
\begin{itemize}
\item[$\bullet$] We propose to utilize the forest index as a reliable measure of robustness for a disconnected graph, which overcomes the weakness of existing measures. Based on forest index, we propose a novel edge group centrality $C(S)$ for an edge group $S$, defined as the increase of forest index when the edges in $S$  are deleted.
\item[$\bullet$] We formulate the problem of minimizing the robustness of graph $\calG$ by removing $k$ edges as an attack strategy. We show that our objective function of the optimization problem  is monotone, but not submodular.  
\item[$\bullet$] We present a simple greedy algorithm  with a bounded approximation ratio that solves the problem in $O(n^3)$ time.  
We also provide a fast algorithm to greatly accelerate evaluating $C(S)$ with computation complexity $\Otil (mk\eps^{-2})$ for any $\eps>0$,  where  $ \eps$ is the error parameter  to meet the performance-efficiency trade-off and the $\Otil (\cdot)$ notation suppresses the ${\rm poly} (\log n)$ factors.
\item[$\bullet$] We confirm the  performance of the proposed algorithms by executing extensive experiments over real-world networks of different scales, which demonstrate that our algorithms are efficient and effective, outperforming several baseline algorithms for deleting edges. 
\end{itemize}




\section{Related work}

Here we briefly review related works in terms of robustness measures, edge removal strategies for robustness optimization, and network design problems.

Over the last years, various robustness metrics have been reported in the literature~\cite{ToPrTsElFaCh10,ElKo13,FrYaKu21}, which can be roughly classified into two categories: measures based on local information and measures based on global information. 

Local robustness measures assess network robustness using relatively less structure information, with most being local information, such as degree, triangles, shortest paths, and motifs. Moreover, many local measures concentrate on the network centrality, including degree centrality, betweenness. Frequently used local measures are network diameter~\cite{NgEf06}, edge clustering coefficient~\cite{FiClFeViDo04}, degree centrality~\cite{Li78}, betweenness centrality~\cite{Ne05} of a node or edge, and so on. The diameter~\cite{NgEf06} of a network represents the longest shortest path between all node pairs. The clustering coefficient~\cite{FiClFeViDo04} of an edge reflects  the number of triangles containing the edge. The degree centrality~\cite{Li78} of a node characterizes the number of its neighbors. The betweenness centrality~\cite{Ne05} of a node or edge  measures the fraction of the shortest paths passing through the node or edge.

For those network robustness measures based on global information, they incorporate massive information of the entire graph, much of which is global even complete. Global robustness measures include the number of spanning  trees~\cite{WeGuViMe04, BaHo09}, Kirchhoff index~\cite{ElKo13,GhBoSa08, LiZh18, YaMoQiWe19,OeFa21}, the spectrum radius of the adjacency matrix~\cite{ToPrTsElFaCh10} or non-backtracking matrix~\cite{ LiChZh17, LiZh19, ZhZhCh21}, and the Randic index~\cite{DeMeRoSaVa18,BoXi19,KiKiLaPh16}.  If a network has a larger number of spanning trees, we say it is more robust.   The Kirchhoff index of a graph is defined as the average effective resistance between all pairs of nodes in the graph, with a small Kirchhoff index corresponding to a more robust graph. The spectrum radius is the largest eigenvalue or the leading eigenvalue of a matrix concerned: the smaller spectrum radius, the more robust the network. The Randic index quantifies the network robustness as the average square difference of the eigenvalues for the normalized Laplacian matrix from their mean~\cite{DeMeRoSaVa18,BoXi19,KiKiLaPh16}: the lower the Randic index, the more robust the network. Furthermore, based on the eigenvalues for the normalized Laplacian matrix, the Kemeny constant~\cite{XuShZhKaZh20, ZhXuZh20} is leveraged to measure the network vulnerability~\cite{LiWaBuCaSh21}.

Compared with local measures, global measures provide more comprehensive insights into network robustness, especially for some real scenarios such as air transportation networks~\cite{YaMoQiWe19}. However, existing local and global robustness measures, have one or more of the following deficiencies: capturing only partial structure information, being difficult to compute, applying only to connected graphs, and non-monotonically changing by edge modifications. For example, previous global robustness measures including the aforementioned ones are not be suitable to assess the robustness of disconnected networks, in spite of the facts that disconnected networks are ubiquitous in real systems. Although the average inverse distance can be used to describe the robustness of disconnected graphs~\cite{FrYaKu21}, its exact computation takes $O(n^3)$, which is impractical for larger networks with millions of nodes. In contrast, our proposed network robustness measure---forest index--- successfully avoids the pitfalls of the existing measures.


Another body of researches related to ours are the optimization of network robustness   through link removals. Actually, edge operations are convenient and practical to improve or reduce the robustness of a network. As a consequence, a concerted effort has been devoted to optimally removing a fixed number of edges for different purposes in various fields. 
In~\cite{TsSuLiHsMy94, Ra98}, the edge removal strategy is exploited to minimize the number of spanning trees. In order to contain disease dissemination, link deletion is considered in~\cite{ZhZhCh21,VaStKuLiVaLiWa11,ToPrElFaFa12} to decrease the spectral radius of related matrices, which quantifies the epidemic threshold.  In~\cite{GaNa19}, edge deletion is applied to minimize the average inverse distance. Moreover, the edge removal strategy is also used in other scenarios for different targets, such increasing the diameter~\cite{ScBoVa87} or the single-source shortest path~\cite{LiShHu00}, minimizing the
size of the $k$-core structure~\cite{MeMaSiSi20}, reducing the total pairwise connectivity~\cite{DiXuThPaZn12, ShNgXuTh13} or the closeness centrality of a given node group~\cite{VerPrPa19}.  In contrast to prior studies, we address a novel optimization problem, and present two new algorithms, with the faster one being nearly linear.


Our optimization problem for network robustness also falls into network design problems~\cite{Me19,ChPeYiTo}, which aim to optimize a certain network metric by modifying the network topology~\cite{Ch18}. Since networks have become a popular framework for modeling real systems, including VLSI, transportation, and communication systems, network design is critical to controlling realistic systems, which has been extensively applied to various aspects for different goals~\cite{Me19}. In~\cite{GhBo06},  edge addition is exploited to maximize the algebraic connectivity (the least non-zero eigenvalue of Laplacian matrix)  of a graph, which measures the extent of connectedness of the graph. In~\cite{LiPaYiZh20}, the edge addition operation is performed to maximize the number of spanning trees in a connected graph.  In~\cite{SoArAmPrAn18}, the edge addition strategy is adopted to improve node group centrality for coverage and betweenness of a  graph. In~\cite{ToPrTsElFaCh10} and~\cite{ChToPrElFaFa16,ChXuLeDuTaChPr17}, both node-level and edge-level manipulation strategies are proposed to optimize the leading eigenvalue of a network, which is a key network connectivity measure.  In~\cite{ChPeYiTo21}, the fundamental limits about the
hardness and the approximability of network connectivity optimization problems are studied.  Finally, in~\cite{ChHeBlTo17}, the measures and optimization related to connectivity for multi-layered networks are studied.
\section{Preliminaries}
This section collects basic concepts and relevant tools  to facilitate the description of our problem and the development of our greedy algorithms. Table~\ref{Notation} lists the main notations we use throughout the paper.
	\begin{table}
		\centering
		\caption{Notation Explanations.}\label{Notation}
		\begin{tabular}{ll}
			\toprule
			Notation & Definition and Description  \\
			\midrule
			$\calG$ & An undirected weighted graph \\
			$V,E,w$ & Node set, edge set, weight function in $\calG$\\
            $w_{\rm max}$, $w_{\rm min}$ & Maximum and minimum weight among all edges in $E$\\
            $\lambda_i$ & The $i$-th smallest eigenvalue of Laplacian matrix $\LL$\\
            $N(u)$ & The set of neighbours of node $u$\\
            $S$ & A set of edges to be deleted from $E$\\
            $k$ &The number of edges in $S$\\
            $C(S)$ & Increase of forest index  when edges in $S$ are deleted \\
            $\ee_i$  & The $i$-th standard basis vector \\
            $\one$, $\JJ$   & Vector and matrix  with all entries being ones\\
            $\mathbf{0}$, $\mathbf{O}$ &Vector and matrix  with all entries being zeros\\
            $\aaa^\top$, $\AAA^\top$ &Transpose of  vector $\aaa$ and matrix  $\AAA$\\
            $\AAA \preceq \BB$ & $\BB - \AAA$ is positive semidefinite\\
            $\trace{\AAA}$ & The trace of matrix $\AAA$\\
            $\norm{\AAA}_F$ & Frobenius norm of  matrix: $\norm{\AAA}_F=\sqrt{\trace{\AAA^\top \AAA}}$\\
            $a\overset{\eps}\approx b$ & $(1-\eps) b\leq a \leq (1+\eps) b$\\
			\bottomrule
		\end{tabular}
	\end{table}

\subsection{Network and Matrices}
We consider a general network topology presented by an undirected, weighted graph $\calG= (V,E,w)$ with $|V|=n$ nodes, $|E|=m$ edges, and edge weight function $w : E \to \rea_{+}$. For a pair of adjacent nodes $i$ and $j$, we write $i\sim j$ to denote $(i,j) \in E$. The set of neighbours $N(i)$ of node $i$ is denoted as  $N(i)=\{j| i\sim j\}$. The Laplacian matrix of $\calG$ is the symmetric matrix $\LL = \DD - \AAA$, where $\AAA$ is the weighted adjacency matrix and $\DD$ is the degree diagonal matrix  of $\calG$.

If each edge of $\calG$ oriented arbitrarily,   then one can define the signed edge-vertex incidence matrix $\BB_{m\times n}$ for $\calG$. Concretely, the elements of $\BB$ are defined as follows: $b_{e v}=1$ if node $v$ is the head of edge $e$, $b_{ev}=-1$ if node $v$ is the tail of edge $e$, and $b_{ev}=0$ otherwise. Let $\WW_{m\times m}$ be the diagonal edge weight matrix of graph   $\calG$, with $\WW_{ee} = w_e$.  Then, the Laplacian $\LL$ can also be represented as $\LL = \BB^\top \WW \BB$, which means that $\LL$ is singular and positive semidefinite with a unique zero eigenvalue if graph $\calG$ is connected. 
 
Let $0=\lambda_0 \leq \lambda_1\le\lambda_2\le\cdots\le\lambda_{n-1}$ be the $n$ eigenvalues of  $\LL$ for a  graph $\calG$, and $\uu_i$ be the corresponding  orthogonal eigenvectors. Then,  $\LL$ has an eigendecomposition of form $\LL=\UU \Lambda \UU^\top=\sum_{i=0}^{n-1} \lambda_i \uu_i\uu_i^{\top}$ where $\Lambda=\textrm{diag}\left(\lambda_0,\lambda_1,\lambda_2,..,\lambda_{n-1}\right)$ and $\uu_i$ is the $i$-th column of matrix $\UU$.  Let $\lambda_{\max}$ and  $\lambda_{\min}$  be, respectively, the maximum and nonzero minimum  eigenvalue of  $\LL$. Then,  $\lambda_{\max}= \lambda_{n-1}\leq n\wmax \, $~\cite{SpSr11}, and $\lambda_{\min}\geq \wmin/ n^2 $~\cite{LiSc18}.

\subsection{Forest Matrix}
The forest matrix of graph $\calG$ is defined as $\mathbf{\Omega}=\left(\II+\LL\right)^{-1}=(\omega_{ij})_{n \times n}$. It   is symmetric and positive definite, with its eigendecomposition being $\mathbf{\Omega}=\UU \tilde{\Lambda} \UU^\top=\sum_{i=0}^{n-1} \frac{1}{\lambda_i+1} \uu_i\uu_i^{\top}$, where $\tilde{\Lambda}$ is a diagonal matrix given by $\tilde{\Lambda}=\textrm{diag}(\frac{1}{1+\lambda_0}, \cdots,\frac{1}{1+\lambda_{n-2}},\frac{1}{1+\lambda_{n-1}})$ satisfying  $1=\frac{1}{1+\lambda_0} \geq \frac{1}{1+\lambda_1}\ge\cdots\ge\frac{1}{1+\lambda_{n-1}}$.  It has been shown~\cite{ChSh97,ChSh98} that  $\mathbf{\Omega}$ is  doubly stochastic, obeying  $\mathbf{\Omega} \textbf{1}=\textbf{1}$ and $\textbf{1}^\top \mathbf{\Omega}= \textbf{1}^\top $. For any connected graph, $\omega_{ij}> 0$. Moreover, $\sum_{j=1}^{n}\omega_{ij}=1$ for $i=1,2,\ldots,n$, and $\sum_{i=1}^{n}\omega_{ij}=1$
for $j=1,2,\ldots,n$.

\subsection{Submodular Functions and Greedy Algorithms}
We give the definitions of monotone and submodular set functions. For simplicity, we use $S+u$ to denote $S \cup \{u\}$. For 
a finite set $X$, let $2^X$ be the set of all subsets of $X$.

\begin{definition}[Monotonicity]
A set function $f:2^X\rightarrow \rea$ is monotone increasing if $f(S) \le f(T)$ holds for all $S \subseteq T \subseteq X$.
\end{definition}
	
\begin{definition}[Submodularity]
\label{def:sub}
A set function $f:2^X\rightarrow \rea$ is submodular if $f(S+u) - f(S) \ge f(T+u) - f(T)$ holds for all $S \subseteq T \subseteq X$ and $u\in X\backslash T$.
\end{definition}
Many network topology design problems can be formulated as maximizing a monotone submodular set function over a $k$-cardinality constraint. Formally the problem can be described as follows: 

find a subset $T^\ast$ satisfying $T^\ast\in\argmax_{|T|=k}f(T)$ where $f$ is a non-decreasing submodular set function. 

It is inherently a combinatorial problem, which means that the exhaustive search  takes exponential time to obtain the optimal
solution. Thus, the brute-force method is computationally intractable even for searching the optimal set with medium $k$. However, utilizing the diminishing property, the greedy approach of iteratively selecting the most promising elements  is known to enjoy a  guaranteed $(1-1/e)$ approximation ratio for submodular maximization problems~\cite{NeWoFi78}.

	
	
Nonetheless, for network topology design problems, there are still a major class of important functions lacks submodularity. To address this concern, a recent work~\cite{BiBuKrTs17}    provides a tight approximation guarantee of $\frac{1}{\alpha}\kh{1-e^{-\alpha\gamma}}$ for non-submodular functions, where $\alpha$ and $\gamma$ are the submodularity ratio and the generalized curvature, respectively. $\alpha$ and $\gamma$ capture the deviation of a function from being submodular and supermodular, respectively. For consistency, we introduce these two quantities below. Before doing so, we define a new quantity $\Theta_S(T)\defeq f(S\cup T)-f(T)$ to denote the marginal gain of the set $S\subseteq X$ with respect to the set $T\subseteq X$.
	
\begin{definition}[Submodular Ratio]
The submodular ratio of a non-negative set function $f$ is the largest $\gamma\in \rea_{+}$ such that for two arbitrary sets $S\subseteq X$ and $T\subseteq X$,
		\begin{equation}
		\label{eq:gamma}
		\sum_{i \in S\backslash T} \Theta_{i}(T)\ge \gamma\Theta_{S}(T).
		\end{equation}
	\end{definition} 
	
\begin{definition}[Curvature]
		The curvature of a non-negative set function $f$ is the smallest $\alpha\in\rea_{+}$ such that for  two arbitrary sets $S\subseteq X$ and $T\subseteq X$, and any element $j\in T\backslash S$,
		\begin{equation}
		\label{eq:alpha}
		\Theta_j(T\backslash j \cup S)\ge(1-\alpha)\Theta_j(T\backslash j).
		\end{equation}
	\end{definition}
	
 For a non-decreasing function $f$, it has been proved~\cite{BiBuKrTs17} that the  submodular ratio $\gamma\in[0,1]$ with $\gamma=1$  if and only if $f$ is a submodular function; and that the curvature $\alpha\in[0,1]$ with $\alpha=0$  if and only if $f$ is a supermodular function. Using these two quantities, it is possible to derive a performance guarantee for the  greedy strategy applied to a larger class of optimization problems.  
\begin{theorem}[\cite{BiBuKrTs17}]
		\label{th:subgg}
		  The greedy algorithm enjoys an approximation ratio of at least $\frac{1}{\alpha}\kh{1-e^{-\alpha\gamma}}$ for the problem of maximizing a non-negative non-decreasing set function $f$ with submodularity ratio $\gamma$ and curvature $\alpha$.
\end{theorem}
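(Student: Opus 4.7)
The plan is to follow the standard recursive analysis of greedy maximization, extending the classical Nemhauser--Wolsey--Fisher argument by using the submodularity ratio $\gamma$ to control the ``one-step'' comparison between the greedy marginal and the total gain from the optimum, and the curvature $\alpha$ to amplify that one-step bound into a stronger overall approximation guarantee. Let $T^{\ast}$ be an optimal $k$-set and let $S_0=\emptyset,S_1,\ldots,S_k$ be the sets produced by the greedy algorithm, with $s_{i+1}\in\arg\max_{j\notin S_i}\Theta_j(S_i)$ and $S_{i+1}=S_i\cup\{s_{i+1}\}$; assume $f(\emptyset)=0$ without loss of generality.

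First I would derive the per-iteration progress inequality. By the greedy rule, $\Theta_{s_{i+1}}(S_i)$ is at least the average of $\{\Theta_j(S_i):j\in T^{\ast}\setminus S_i\}$, so $\Theta_{s_{i+1}}(S_i)\ge\tfrac{1}{k}\sum_{j\in T^{\ast}\setminus S_i}\Theta_j(S_i)$, and plugging in the submodularity-ratio definition~\eqref{eq:gamma} (with $T^{\ast}\setminus S_i$ in the role of $S$ and $S_i$ in the role of $T$) gives
\begin{equation*}
f(S_{i+1})-f(S_i)\ge\frac{\gamma}{k}\bigl[f(T^{\ast}\cup S_i)-f(S_i)\bigr].
\end{equation*}
This reduces the analysis to lower bounding $f(T^{\ast}\cup S_i)-f(S_i)$ in terms of $f(T^{\ast})$ and $f(S_i)$ alone.

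Next I would use the curvature~\eqref{eq:alpha} to establish a bound of the form $f(T^{\ast}\cup S_i)-f(S_i)\ge f(T^{\ast})-\alpha f(S_i)$. The idea is to expand $f(T^{\ast}\cup S_i)-f(T^{\ast})$ as a telescoping sum of single-element marginals taken in the order the greedy algorithm selects them, to apply~\eqref{eq:alpha} to each term so as to compare the marginal of $s_\ell$ in the presence of $T^{\ast}\cup\{s_1,\ldots,s_{\ell-1}\}$ with the marginal in the presence of just $\{s_1,\ldots,s_{\ell-1}\}$, and to handle the terms with $s_\ell\in T^{\ast}$ via the monotonicity of $f$. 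Substituting the resulting inequality into the one from the previous step produces the linear recursion
\begin{equation*}
f(S_{i+1})\ge\Bigl(1-\tfrac{\alpha\gamma}{k}\Bigr)f(S_i)+\tfrac{\gamma}{k}f(T^{\ast}),
\end{equation*}
which, upon unrolling from $S_0=\emptyset$ and using $(1-\alpha\gamma/k)^k\le e^{-\alpha\gamma}$, yields $f(S_k)\ge\tfrac{1}{\alpha}(1-e^{-\alpha\gamma})f(T^{\ast})$, exactly the claimed approximation ratio.

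The main obstacle is the curvature step. The inequality~\eqref{eq:alpha} is a local, single-element statement, whereas the bound we need couples $f(T^{\ast})$, $f(S_i)$, and $f(T^{\ast}\cup S_i)$ globally; the subtlety is that greedy picks falling inside $T^{\ast}$ contribute to $f(S_i)$ but contribute nothing to $f(T^{\ast}\cup S_i)-f(T^{\ast})$, so only a carefully chosen telescoping order together with a non-trivial accounting of these ``wasted'' terms extracts exactly the factor $\alpha$; a careless application recovers only the weaker submodularity-only bound $1-e^{-\gamma}$. Once the correct curvature bound is in hand, the remainder of the argument reduces to routine algebra on the linear recursion.
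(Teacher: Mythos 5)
First, a point of reference: the paper itself offers no proof of this theorem---it is imported verbatim from the cited work \cite{BiBuKrTs17}---so your proposal has to be judged against the argument in that reference. Your first step is correct and is indeed how that argument begins: greedy dominates the average marginal over $T^{\ast}\setminus S_i$, and the submodularity ratio (with $S=T^{\ast}$, $T=S_i$ in \eqref{eq:gamma}) converts $\sum_{j\in T^{\ast}\setminus S_i}\Theta_j(S_i)$ into $\gamma\,\Theta_{T^{\ast}}(S_i)$, giving $f(S_{i+1})-f(S_i)\ge\frac{\gamma}{k}\bigl[f(T^{\ast}\cup S_i)-f(S_i)\bigr]$.

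The genuine gap is your curvature step: the inequality $f(T^{\ast}\cup S_i)-f(S_i)\ge f(T^{\ast})-\alpha f(S_i)$ that you plan to establish is simply false in general, so no choice of telescoping order or accounting of ``wasted'' terms can deliver it. If $S_i\subseteq T^{\ast}$ (greedy may well pick optimal elements), the left side equals $f(T^{\ast})-f(S_i)$, which is strictly smaller than $f(T^{\ast})-\alpha f(S_i)$ whenever $\alpha<1$ and $f(S_i)>0$. The ensuing linear recursion fails too: for the modular function with weights $10,1,1,\dots$ and $k=2$ (so $\gamma=1$, $\alpha=0$) it would demand $f(S_2)\ge f(S_1)+\tfrac12 f(T^{\ast})=15.5$ while $f(S_2)=11$, even though the theorem's conclusion holds there with equality. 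What the curvature expansion along the greedy order actually gives is $f(T^{\ast}\cup S_i)-f(S_i)\ge f(T^{\ast})-\alpha f(S_i)-(1-\alpha)\sum_{\ell\le i,\; s_\ell\in T^{\ast}}\Theta_{s_\ell}(S_{\ell-1})$: the overlap terms survive with coefficient $1$ rather than $\alpha$, and monotonicity cannot erase them. Accordingly, the proof in \cite{BiBuKrTs17} (which extends Conforti--Cornu\'ejols' curvature analysis for submodular functions) does not run a per-iteration recursion on $f(S_i)$; it works with the family of inequalities $f(T^{\ast})\le\frac{k}{\gamma}\Theta_{s_{t+1}}(S_t)+\alpha\sum_{\ell\le t,\,s_\ell\notin T^{\ast}}\Theta_{s_\ell}(S_{\ell-1})+\sum_{\ell\le t,\,s_\ell\in T^{\ast}}\Theta_{s_\ell}(S_{\ell-1})$ over all $t$ and extracts the factor $\frac{1}{\alpha}\kh{1-e^{-\alpha\gamma}}$ by a global argument that treats the iterations where greedy hits $T^{\ast}$ separately. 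Without such an accounting, your unrolling step has nothing valid to unroll, so the proposal as written does not prove the theorem.
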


\section{Forest Index as a Measure of Robustness }\label{S4}

In this section, we adopt a spectral measure in graph theory, called forest index, to measure the robustness of disconnected networks. For consistency, we start with introducing the forest distance and the forest index. 

\subsection{Forest Distance and Forest Index}

It has been shown~\cite{ChSh97} that the elements of $\mathbf{\Omega}$ satisfy the triangle inequality: $\forall i,j,k=1,2,\ldots,n$, $\omega_{ij}+\omega_{ik}-\omega_{jk}\le \omega_{ii}$. Thus, one can naturally define a distance metric, called \textit{forest distance}.
\begin{definition}[Forest Distance]
 For an undirected graph $\calG(V,E,w)$, the forest distances $\rho_{ij}$ between a pair of nodes $i$ and $j$ is defined as 
 \begin{align*}
 \rho_{ij} \triangleq \bb_{ij}^\top \OM \bb_{ij}=\omega_{ii}+\omega_{jj}-2\omega_{ij}.
 \end{align*}
\end{definition}
In contrast to the standard geodesic  distance, forest distance ideally considers alternative paths of different lengths for all pairs of nodes~\cite{ChSh97}. For any  pair of nodes $i$ and $j$ in graph $\calG$, $0\le\rho_{ij}\le 2$ with the extreme cases that $\rho_{ij}=0$ if and only if $i=j$ and that $\rho_{ij}=2$ if and only if $i$ and $j$ are two distinct isolated nodes. On the basis of forest distance $\rho_{ij}$, we can further define the forest index, $\rho$ or $\rho(\calG)$, of a graph $\calG$.
\begin{definition}[Forest Index]
 For an  undirected graph $\calG(V,E,w)$,  its forest index $\rho(\calG)$ is defined  as the sum of forest distances over all node pairs. Namely,
\begin{align*}
\rho(\calG)\triangleq \sum_{i<j}\rho_{ij}=n\sum_{i=1}^{n-1}{\frac{1}{1+\lambda_i}}=n\trace{\OM}-n.  
\end{align*}
\end{definition}

\subsection{Robustness Interpretation}

In this subsection, we show that the forest index captures some desired properties, which make it a reliable robustness metric for a network. Intuitively, the forest index explicitly characterizes the redundancy of node-to-node paths. Moreover, the forest index also captures the information of number of walks of different length between all node pairs~\cite{ChSh97,ChSh98}. Thus, it is a global measure, including the contributions of both shortest paths and non-shortest paths.    In addition, the forest index not only is a  distance metric, but also has clear physical and structural meaning.  In particular, it explicitly characterizes the spanning forests 
of a network. For consistency, below we introduce some notions about  spanning forests.

For a graph  $\calG=(V,E,w)$, a subgraph $\mathcal{H}$ is a graph whose node sets are subsets of $V$ and edge sets are subsets of $E$, respectively. A subgraph $\mathcal{H}$ is spanning if $\mathcal{H}$ contains all the vertices of  $\calG$. A spanning forest of $\calG$ is a spanning subgraph of $\calG$ that is a forest. A spanning rooted forest of $\calG$ is a spanning forest of $\calG$ with a single node (a root) marked in each tree. For a subgraph $\mathcal{H}$ of graph $\calG$, the product of the weights of all edges in $\mathcal{H}$ is referred to the weight of  $\mathcal{H}$, denoted as  $\varepsilon(\mathcal{H})$. If $\mathcal{H}$ has no edges, its weight is set to be $1$. For any nonempty set $S$ of subgraphs, we define its weight $\varepsilon(S)$ as $\varepsilon(S) =\sum_{\mathcal{H} \in S}  \varepsilon(\mathcal{H})$. If $S$ is empty, we set its weight to be zero~\cite{ChSh97,ChSh98}.

Let $\Gamma$ be the set of all spanning forests of graph $\calG$ and $\Gamma_{i j}$ the set of those spanning forests of $\calG$ where node $j$ is in  the tree rooted at node $i$. It has been proved in~\cite{ChSh97,ChSh98} that the element $\omega_{ij}$ of forest matrix $\OM$ satisfies $\omega_{ij}=\varepsilon(\Gamma_{i j})/ \varepsilon(\Gamma)$ where $\varepsilon(\cdot)$ denotes the weight function  of corresponding spanning forests. In the case that every edge in $\calG$ has unit weight, $\varepsilon(\cdot)$ is equal to the number of spanning rooted forests. Since $\OM$ is doubly stochastic, $\omega_{ij}$ can be explained as the fraction of the connectivity of pair $(i,j)$ in the total connectivity of $i$ (or $j$ ) with all nodes~\cite{ChSh98}. Furthermore, $\omega_{ij}$ can be interpreted a the proximity between nodes $i$ and $j$~\cite{ChSh97}: the smaller the value of $\omega_{ij}$, the ``farther” $i$ from $j$. For an arbitrary pair of nodes $i$ and $j$ in $\calG$, $\omega_{ij}\geq 0$ with equality if and only if there is no path between $v_i$ and $v_j$~\cite{Me97}. 

The number of spanning forests constitutes  robustness indices for both disconnected and connected graph~\cite{ChSh97,ChSh98} while the number of spanning trees applies only to connected graph. Since forest index is expressed in terms of the entries of forest matrix $\OM$, which encapsules the information of number of spanning forests in a contracted form, it is thus a global measure of robustness.  Moreover, forest index is closely related to network connectivity. For any graph $\calG$ with $n$ nodes, $\frac{n(n-1)}{n+1}\leq \rho(\calG)\le n(n-1)$, with equality  $\rho=\frac{n(n-1)}{n+1}$ if and only if  $\calG$ is the complete graph ("best connected"), and equality $\rho=n(n-1)$ if and only if $\calG$ is the empty graph consisting of $n$ isolated nodes ("poorest connected"). Thus, it also captures the overall connectivity of a graph, with a lower value corresponding to better connectivity.  


Forest index ideally capture three desirable properties. First, forest index reflects the global connectivity and communicability in the network through alternative paths, which closely relate to robustness. It combines the robustness with network topology, graph spectra, and dynamical properties. Second, forest index is able to capture the variation of robustness sensitively even for disconnected networks where network criticality and algebraic connectivity fail to. Finally, forest index shows strict monotonicity respect to the addition/deletion of edges~\cite{ChSh97,ChSh98}, which agrees with intuition while algebraic connectivity does not show such monotonicity. These desirable properties motivate us to consider index as a reliable robustness metric.

\section{Problem Formulation}
Given an undirected weighted network $\calG=(V, E, w)$, one could make attacks on the network by deleting a set of essential edges in $S$, with the goal of breaking down the network to the greatest extent. As a result of the attack, the forest distance between any pair of vertices will increase and the same holds for the forest index, as we will show later. Then the following  problem  arises naturally: how to optimally select and delete a subset $S$ of $E$ subject to a cardinality constraint, so that the forest index of the resultant graph is maximized.

In order to tackle this problem,  we first propose a centrality measure, \emph{forest edge group centrality} (FEGC), to capture how the forest index changes under the deletion of a group of edges from the network. Then, based on FEGC, we propose an optimization problem serving as a guideline for selecting $k$ most important edges. 

\subsection{Forest Edge Group Centrality}\label{S1}
We start with establishing the notion of forest edge group centrality (FEGC). In our subsequent analysis, we will use the following notation to improve readability. We use $\calG-S$ to denote the resulting graph obtained by deleting the edges in $S$ from $E$, i.e., $\calG-S=(V, E\backslash S,\overline{w})$, where $\overline{w}:E\backslash  S \to \rea_{+}$ is the new weight function.
\begin{definition}[Forest Edge Group Centrality]
	\label{def:cenP}
	Let $\calG=(V, E, w)$ be an undirected weighted  graph and $S$ be a subset of edge set $E$. Then, the forest edge group centrality $C(S)$ for an edge group $S$ is defined as the increase of the forest index of the graph $\calG$ when all edges in $S$ are deleted from $E$, that is,
	\begin{equation}\label{eq:C}
	C(S) = \rho(\calG-S) - \rho(\calG),
	\end{equation}
where $\rho(\calG-S)$ is forest index of the resultant graph $\calG-S$.
\end{definition}
  
For notational simplicity,  we use $C(e)$ and $\calG-e$, respectively, to denote $C(\{e\})$ and $\calG-\{e\}$ hereafter. In the case that $S$ only consists of a single edge $e$, $C(e)$ is in fact a metric characterizing the importance of edge $e$. It should be noted that $C(e)$ not only applies to disconnected graph but also has a better discriminating power than other edge importance metrics for connected graph, as the following example shows.

\textbf{Example.}
Consider two edges $e_1$ and $e_2$ in the disturbed ring graph in Figure~\ref{fi:comp}. By intuition, $e_1$ plays an more important role than $e_2$ for the following arguments: The shortest path length between $i$ and $j$ will increase by $12$ if we delete $e_1$. On the contrary, the length of shortest paths between any node pairs will increase by at most $1$ as for removing $e_2$. However, betweenness centrality fails to distinguish the importance of these two edges, since their betweenness centrality  are equal to $16$. On the contrary, our proposed centrality verifies that $e_1$ is relatively more important than $e_2$, as $C(e_1)= 3.56$ and $C(e_2) = 1.73$. Such a case agrees with human intuition.
\begin{figure}
	\begin{center}
		\includegraphics[width=0.5\linewidth]{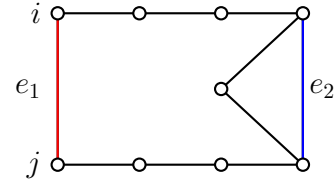}
		\caption{A disturbed ring graph. It is obtained by adding an edge $e_2$ to a 9-node ring.}
		\label{fi:comp}
	\end{center}
\end{figure}

\subsection{Maximizing  Forest Edge Group Centrality}
Having established the centrality measure of edge group, it is quite natural we transform the robustness optimization problem to the problem of maximizing FEGC subject to a cardinality constraint. We now give a mathematical formulation of Problem~\ref{prob:fomi}  in a formal way as follows.
\begin{tcolorbox}
\begin{problem}\label{prob:fomi}
	Given an undirected weighted graph $\calG= (V,E, w)$, an integer $k$, find a subset  $S\subset E$ with $|S|=k$ so that $C(S)$ is maximized. That is: 
\begin{align}
\underset{S\subset E, \,  |S|=k}{\argmax} C(S).
\end{align}
\end{problem}
\end{tcolorbox}

Such an robustness optimization problem has been the subject of many  recent papers~\cite{Ch18,ChPeYiTo}. Solutions to the problem are essential in many domains such as immunization~\cite{GaNa19}, critical infrastructure construction~\cite{BaHo09}, social collaboration mining~\cite{DaHa07}, bioinformatics analysis~\cite{HuLiWu18,ZhZhCh21}, intelligent transportation system design~\cite{ZhAdSaVuPr16}. For example,  how can we maximally break down an adversary network such as a terrorist network by cutting out some of its most important communication channels? How can we effectively contain the spread of a disease by removing some crucial links in the contagion
network? Compared with designing a new topology from the scratch, the operation of deleting edges such as cutting down airlines, hyperlinks, communication channels or social links is a cheap and viable manner to improve or enhance certain desired metrics of a network.
 

While our paper concentrates on decreasing the robustness of a network, it is worth mentioning that our techniques may also help fortify the robustness of the network. In infrastructure networks such as power grids and transportation systems, the full functioning of the systems is strongly dependent on the connectivity of the underlying networks~\cite{BaHo09}. Our approach could help the maintenance team to identify critical facilities and transmission lines whose failure would sabotage the connectivity of the entire network, so that precaution and protection measures can be implemented proactively~\cite{AzGaNa17,ChPeYiTo}.
\section{Problem Characterization and a Na\"{\i}ve Greedy Approach}
In this section, we first study the characterization of Problem~\ref{prob:fomi}.  Then, we provide a bounded approximation  greedy algorithm.  

We start with clarifying the theoretic challenges of Problem~\ref{prob:fomi}. The main computation obstacle for the Problem~\ref{prob:fomi} involves two dimensions. On one hand, searching for the best edge subset delivering the maximum of the objective is inherently a combinatorial problem, which is computationally infeasible to solve in a na\"{\i}ve brute-force manner. On the other hand, assessing the impact of a given subset of edges upon the objective involves cubic-time matrix inversion. To be specific, for each candidate edge set $S$ coming from $\tbinom{|E|}{k}$ possible subsets, we need to calculate the forest index of the resultant graph, leading to an exponential complexity $O\big(\tbinom{|E|}{k}\cdot n^3\big)$.

\subsection{Monotonicity and Non-Submodularity}
To tackle the exponential complexity, we resort to greedy heuristics. Firstly, we find that the objective function is monotone respect to the edge set $T$. However, we observe that, different from the case in other standard  optimization settings, neither submodularity nor supermodularity holds for our objective function. We present above results by the following two propositions.

\begin{proposition}[Monotonicity]\label{th:mono}
$C(S)$ is a monotonically increasing function for the edge set $S$. In other words,  for any subsets $S\subseteq T\subseteq E$, one  has $C(S) \leq C(T).$
\end{proposition}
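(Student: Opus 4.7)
The plan is to reduce monotonicity of $C$ to a Löwner (positive semidefinite) ordering on the forest matrices of the graphs obtained by successive edge deletions. Since $C(S) = \rho(\calG - S) - \rho(\calG)$ and the subtracted quantity $\rho(\calG)$ does not depend on $S$, the claim $C(S) \leq C(T)$ for $S \subseteq T$ is equivalent to
\[
\rho(\calG - S) \leq \rho(\calG - T).
\]
So I only have to show that deleting \emph{more} edges can never decrease the forest index.

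The key ingredient is the incidence representation of the Laplacian. Let $\LL_S$ and $\LL_T$ denote the Laplacians of $\calG - S$ and $\calG - T$. Writing each edge's contribution as $w_e \bb_e \bb_e^\top$ with $\bb_e = \ee_u - \ee_v$, I get
\[
\LL_S - \LL_T \;=\; \sum_{e \in T \setminus S} w_e \, \bb_e \bb_e^\top \;\succeq\; \mathbf{O},
\]
because $w_e > 0$ and each $\bb_e \bb_e^\top$ is rank-one positive semidefinite. Adding $\II$ to both sides preserves the order, so $\II + \LL_T \preceq \II + \LL_S$. Both matrices are positive definite, so inversion reverses the order, yielding $\OM_T = (\II + \LL_T)^{-1} \succeq (\II + \LL_S)^{-1} = \OM_S$.

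Taking traces (which preserves the PSD order on symmetric matrices) gives $\trace{\OM_T} \geq \trace{\OM_S}$, and multiplying by $n$ and subtracting $n$ gives $\rho(\calG - T) \geq \rho(\calG - S)$, hence $C(T) \geq C(S)$. As an alternative justification of the trace step one can use the eigenvalue formula $\rho(\calG) = n \sum_{i=1}^{n-1} \frac{1}{1+\lambda_i}$ together with the Courant--Fischer min--max principle, which gives $\lambda_i(\LL_T) \leq \lambda_i(\LL_S)$ for every $i$ and hence $\tfrac{1}{1+\lambda_i(\LL_T)} \geq \tfrac{1}{1+\lambda_i(\LL_S)}$ term by term.

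There is no real obstacle here; the only subtlety worth double-checking is the implication $\AAA \succeq \BB \succ \mathbf{O} \Rightarrow \AAA^{-1} \preceq \BB^{-1}$, which I will invoke as the standard Löwner--Heinz monotonicity of the map $X \mapsto X^{-1}$ on positive definite matrices. Note also that the argument is insensitive to whether $\calG - S$ or $\calG - T$ is connected, so it applies to disconnected graphs as well, which is precisely the setting of interest.
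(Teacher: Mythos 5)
Your proof is correct, but it follows a genuinely different route from the paper. The paper does not argue via the L\"owner order at all: it first establishes Lemma~\ref{lem:dpedc}, using the Sherman--Morrison rank-one update of $(\II+\LL)^{-1}$ to get the closed form $C(e)=\frac{n w_e}{1-w_e\rho_e}\,\bb_e^\top\OM^2\bb_e$ for a single edge deletion, then observes that $\OM^2\succ \mathbf{O}$ and $\rho_e<1/w_e$ force $C(e)>0$, and concludes monotonicity of $C(S)$ from the positivity of every single-edge marginal gain (implicitly, by deleting the edges of $T\setminus S$ one at a time and telescoping). Your argument instead compares $\calG-S$ and $\calG-T$ in one shot: $\LL_S-\LL_T=\sum_{e\in T\setminus S}w_e\bb_e\bb_e^\top\succeq\mathbf{O}$, hence $\II+\LL_T\preceq\II+\LL_S$, antitonicity of inversion on positive definite matrices gives $\OM_T\succeq\OM_S$, and taking traces yields $\rho(\calG-T)\geq\rho(\calG-S)$; the alternative via Weyl/Courant--Fischer eigenvalue monotonicity is equally valid. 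What each approach buys: yours is shorter, avoids Sherman--Morrison, does not need the (slightly nontrivial) fact that $1-w_e\rho_e>0$, and makes the insensitivity to connectedness transparent; the paper's route has the advantage that the explicit formula for $C(e)$ is needed anyway for the greedy algorithms, so monotonicity falls out as a by-product, and your spectral comparison is in fact closer in spirit to the eigenvalue manipulations the paper reserves for Lemma~\ref{lem:sub}. Both are complete proofs of the proposition.
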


 In order to explore the properties of objective funtion, we need to understand how the deletion of a single edge $e$ changes the forest index of the network. To be specific, the variation of the forest index under the perturbation of a single edge can be expressed by the following lemma.
\begin{lemma}\label{lem:dpedc}
	For a candidate edge $e\in E$ connecting node $u$ and $v$ with vector $\bb_e=\ee_u-\ee_v$, one obtains
	\begin{equation}\label{eq:dpdec}
	C(e)  =\frac{n w_e}{1-w_e\rho_{e}}\bb_{e}^\top \OM^2\bb_{e} \geq 0.
	\end{equation}
\end{lemma}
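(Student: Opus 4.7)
The plan is to exploit the rank-one update structure induced by removing a single edge, and then apply the Sherman–Morrison formula to obtain a closed-form expression for the change in the forest matrix. Since the forest index is a spectral functional, it suffices to track how $\trace{\OM}$ changes: using the definition stated in Section~\ref{S4}, $C(e)=\rho(\calG-e)-\rho(\calG)=n\bigl(\trace{\OM_{-e}}-\trace{\OM}\bigr)$, where $\OM_{-e}\defeq (\II+\LL_{-e})^{-1}$ is the forest matrix of $\calG-e$. The key structural observation is that the edge incidence representation $\LL=\BB^\top \WW \BB$ gives
$$\LL_{-e}=\LL-w_e\,\bb_e\bb_e^\top,\qquad \text{hence}\qquad \OM_{-e}^{-1}=\OM^{-1}-w_e\,\bb_e\bb_e^\top.$$

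Next, I would invoke Sherman–Morrison to invert this rank-one perturbation:
$$\OM_{-e}=\OM+\frac{w_e\,\OM\bb_e\bb_e^\top\OM}{1-w_e\,\bb_e^\top\OM\bb_e}
=\OM+\frac{w_e}{1-w_e\rho_e}\,\OM\bb_e\bb_e^\top\OM,$$
where I have used that $\bb_e^\top\OM\bb_e=\rho_e$ by the very definition of forest distance. Taking the trace and using its cyclic property turns the rank-one outer product into the desired scalar form, $\trace{\OM\bb_e\bb_e^\top\OM}=\bb_e^\top\OM^2\bb_e$, which immediately yields
$$C(e)=n\bigl(\trace{\OM_{-e}}-\trace{\OM}\bigr)=\frac{n w_e}{1-w_e\rho_e}\,\bb_e^\top\OM^2\bb_e.$$

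It remains to check the non-negativity claim. The factor $\bb_e^\top\OM^2\bb_e\ge 0$ is immediate since $\OM^2=\OM^\top\OM$ is positive semidefinite. The real content is that the scalar denominator $1-w_e\rho_e$ is strictly positive. I would argue this from the fact that $\OM_{-e}^{-1}=\II+\LL_{-e}$ is again positive definite (because $\LL_{-e}$ is the Laplacian of a legitimate weighted graph and is thus PSD), so the rank-one down-date of the positive-definite matrix $\OM^{-1}$ by $w_e\bb_e\bb_e^\top$ must preserve positive definiteness; the standard criterion for this is exactly $w_e\,\bb_e^\top\OM\bb_e=w_e\rho_e<1$. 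With numerator and denominator both non-negative, and $w_e,n>0$, the inequality $C(e)\ge 0$ follows.

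I do not anticipate any serious obstacle: the proof is essentially a Sherman–Morrison computation with a trace manipulation, and the only subtlety is justifying $1-w_e\rho_e>0$, which can be done either by the positive-definiteness argument above or by directly noting that $\OM_{-e}$ would otherwise fail to be a well-defined forest matrix. The whole argument is two display lines of algebra plus one short remark on the sign of the denominator.
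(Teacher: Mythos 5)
Your proposal is correct and follows essentially the same route as the paper's proof: a Sherman--Morrison rank-one update of $(\II+\LL)^{-1}$ followed by the trace/cyclic-property manipulation $\trace{\OM\bb_e\bb_e^\top\OM}=\bb_e^\top\OM^2\bb_e$. Your explicit positive-definiteness argument for $1-w_e\rho_e>0$ is a slightly more detailed justification of the fact the paper simply states as $\rho_e<1/w_e$, but it is the same underlying observation.
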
 
\begin{proof}
The deletion of an edge could be viewed as a rank-$1$ correction of the Laplacian $\LL$. That is, if we perturb the network by removing  an edge $e$ with weight $w_e$, we obtain the Laplacian $\LL-w_e\bb_e\bb_e^\top$. The Sherman-Morrison formula~\cite{Me73} for the matrix-inverse leads to:
	\begin{align*}
	& \kh{\II+\LL - w_e\bb_e\bb_e^\top}^{-1}\nonumber 
	= (\II+\LL)^{-1} + \frac{ w_e\OM \bb_e \bb_e^\top \OM}{1 - w_e \bb_e^\top \OM \bb_e}.
	\end{align*}
Following the definitions of $C(e)$ in Eq.~(\ref{eq:C}), we can immediately obtain  
\begin{align*}
C(e)&=\rho(G-e)-\rho(\calG)\\
&=n\trace{\OM + \frac{ w_e\OM \bb_e \bb_e^\top \OM}{1 - w_e \bb_e^\top \OM \bb_e}}-n\trace{\OM}\\&=\frac{n w_e}{1-w_e\bb_e^\top \OM \bb_e}\trace{\OM\bb_e\bb_e^\top \OM} 
=\frac{n w_e \bb_{e}^\top \OM^2\bb_{e}}{1-w_e\rho_{e}},
\end{align*}
which completes the proof.
\end{proof}

Notice that the forest matrix $\OM$ is positive definite, and so is $\OM^2$. Thus, it follows that $\bb_{e}^\top \OM^2\bb_{e}>0$. On the other hand, $\rho_{e}< 1/w_e$. Hence, $C(e)> 0$, which implies the monotonicity of $C(S)$. We then show that $C(S)$ is not submodular.
\begin{proposition}[Non-Submodularity]\label{th:nonsub}
There exists an undirected weighted graph $\calG=(V,E,w)$  such that the objective function $C(S)$ defined on $E$ is not a submodular function of $S$.
\end{proposition}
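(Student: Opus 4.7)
The plan is to disprove submodularity by exhibiting an explicit counterexample. Since Definition~\ref{def:sub} requires the diminishing-returns inequality to hold for \emph{all} choices of $S \subseteq T$ and $u$, producing even one graph and one triple $(S, T, u)$ that violates it suffices. A natural candidate is a small, highly symmetric graph where $\OM$ can be written down in closed form, because there we can evaluate the centrality of every subset by hand using Lemma~\ref{lem:dpedc} (and, for two-element deletions, by direct inversion of $\II+\LL$ of the residual graph).

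Concretely, I would take $\calG=K_3$ with unit edge weights and label its edges $e_1,e_2,e_3$. Using symmetry, the forest matrix has the form $\OM=\alpha\II+\beta(\JJ-\II)$, and solving $(\II+\LL)\OM=\II$ gives a completely explicit $\OM$. From Lemma~\ref{lem:dpedc} one then computes $C(e_i)$ for a single edge in closed form, and by symmetry these three numbers are equal. For the two-edge deletion $\{e_1,e_2\}$, the residual graph is the disjoint union of an isolated node and a single edge, whose Laplacian is block-diagonal, so $\OM'=(\II+\LL')^{-1}$ is immediate and $\rho(\calG-\{e_1,e_2\})$ follows from $\rho = n\,\trace{\OM'}-n$. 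I would then set $S=\emptyset$, $T=\{e_1\}$, $u=e_2$ and verify the strict inequality
\begin{equation*}
C(\{e_2\}) - C(\emptyset) \;<\; C(\{e_1,e_2\}) - C(\{e_1\}),
\end{equation*}
which is precisely the negation of the submodular inequality in Definition~\ref{def:sub}.

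The intuition behind why this works, and why one should expect this pair to be the violating one, is that deleting $e_1$ already damages the connectivity between two of the three vertices; the subsequent deletion of $e_2$ then cuts off the last remaining alternative path for a newly created pair, producing an outsized increase in forest distance. In other words, the marginal cost of destroying a ``bridge'' edge is larger once its parallel redundancies have been removed, so removal operations on edges are \emph{reinforcing} rather than diminishing in this regime. This is also consistent with the Sherman--Morrison expression in the proof of Lemma~\ref{lem:dpedc}: the denominator $1-w_e\rho_e$ shrinks as $\rho_e$ grows under previous deletions, inflating the marginal gain.

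I do not anticipate a real technical obstacle: the computation is just inversion of $3\times 3$ matrices with known symmetry, and the single-edge centralities fall out of Lemma~\ref{lem:dpedc} in one line. The only ``hard'' part is choosing the example compactly enough that the inequality is transparent; if $K_3$ were too symmetric to expose the effect, my fallback would be a 4-cycle $C_4$ (again exploiting cyclic symmetry) or the path $P_3$ with a chord, either of which admits the same closed-form treatment. Once the numeric inequality is verified on the chosen graph, the conclusion of Proposition~\ref{th:nonsub} follows immediately.
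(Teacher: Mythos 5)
Your proposal is correct and takes essentially the same approach as the paper: exhibit an explicit small graph and a specific triple $(S,T,u)$ that violates the diminishing-returns inequality of Definition~\ref{def:sub}. Your $K_3$ example indeed works out exactly (with unit weights one gets $C(\{e_2\})-C(\emptyset)=\tfrac{3}{4} < \tfrac{7}{4} = C(\{e_1,e_2\})-C(\{e_1\})$), so the only difference from the paper is the choice of counterexample graph, a triangle here versus the paper's $4$-node graph.
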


To show the non-submodularity of the function concerned, consider the graph $\calG$ shown in Figure~\ref{fi:nosub} with Laplacian
\begin{equation*}
    \LL=\begin{pmatrix} 3 & -1 & -1 & -1 \\ -1 & 2 & -1 & 0 \\ -1 & -1 & 2 & 0 \\ -1 & 0 & 0 & 1 \end{pmatrix}.
\end{equation*}
To note non-submodularity,  set $S=\emptyset$, $T=\{e_{2}\}$ and $u=e_{1}$. Then,  we have
\begin{small}
\begin{align*}
C(S+u)-C(S)
=2.2 < 2.2381 = C(T+u)-C(T),
\end{align*}
\end{small}
which violates the requirement of submodularity. Thus, our objective function in Problem~\ref{prob:fomi} is non-submodular.
\begin{figure}
	\begin{center}
		\includegraphics[width=0.4\linewidth]{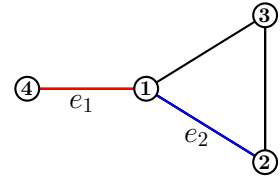}
		\caption{An undirected graph to show the non-submodularity of $C(S)$.}
		\label{fi:nosub}
	\end{center}
\end{figure}

 Greedy strategy has achieved the empirical success on a significantly larger class of non-submodular functions~\cite{BiBuKrTs17}. Despite the fact that  our objective $C(S)$ is not submodular, the greedy approach can still act as a proper manner to solve Problem~\ref{prob:fomi} with provable guarantee. We will detail the outline and bounded guarantee of the greedy approach in the next two subsections.

 \subsection{Na\"{\i}ve Greedy Approach}

  Exploiting the performance guarantee in Theorem~\ref{th:subgg} for non-submodular functions, we propose a simple greedy approach, denoted as \SpGreedy. The general approach is as follows. We first assess each candidate edge $e$ with respect to  its edge centrality $C(e)$, and then iteratively select the most promising edge for deletion  until budget is reached. 
 
The greedy approach starts with the result edge set $S$ being equal to the empty set.  Then $k$ edges are iteratively selected to the edge set from set $E\setminus S$.  At each iteration  of the greedy algorithm,  the edge $e$ in candidate set is chosen that gives the highest centrality score $C(e)$.  The algorithm terminates until $k$ edges are selected to be added to $S$.  A na\"{\i}ve greedy algorithm takes time $O(km  n^3)$, which is computationally difficult when employed on large networks. To alleviate computation burden, we  resort to matrix perturbation theory. As shown in the proof of Lemma~\ref{lem:dpedc}, with $(\II+\LL)^{-1}$  already computed, we can view the deletion of a single edge $e$ as a rank-$1$ update to the original matrix $(\II+\LL)^{-1}$, which can be obtained with a run-time of $O(n^2)$ by using Sherman-Morrison formula~\cite{Me73}, instead of inverting the matrix again in $O(n^3)$ in each loop.
 
 The above analysis leads to a  greedy algorithm $\SpGreedy(G,k)$, which is outlined in Algorithm~\ref{alg:SpG}. To begin with, this algorithm requires $O(n^3)$ time to compute the inverse of $\II+\LL$, and then it performs in $k$ rounds,  with each round mainly including two steps: computing $C(e)$ for each $e\in  E \setminus S$ (Line 4)   in $O(m n^2)$ time,  and updating  $(\II+\LL)^{-1}$ (Line 8)  in $O(n^2)$ time.  Thus,  the total running time of Algorithm~\ref{alg:SpG} is $O(n^3+kmn^2)$, much  faster than the brute-force manner.
 \begin{algorithm}
	\caption{$\SpGreedy(\calG, k)$}
	\label{alg:SpG}
	\Input{
		An undirected graph $\calG$; an integer $k \leq |E|$
	}
	\Output{
		A subset of $S \subset E$ and $|S| = k$
	}
	Initialize solution $S = \emptyset$ \;
	Compute $(\II+\LL)^{-1}$ \;
	\For{$i = 1$ to $k$}{
		Compute $C(e)$
		for each $e \in E \setminus S$ \;
		Select $e_i$ s.t. $e_i \gets \mathrm{arg\, max}_{e \in E \setminus S} C(e)$ \;
		Update solution $S \gets S \cup \{ e_i \}$ \;
		Update the graph $\calG \gets \calG- e_i $ \;
		Update $(\II+\LL)^{-1} \gets (\II+\LL)^{-1} + \frac{w_e \OM \bb_e \bb_e^\top \OM}{1 -w_e \bb_e^\top \OM \bb_e}$
	}
	\Return $S$
\end{algorithm}
\subsection{Bounded Approximation Guarantee }
 In this subsection, we analyze the proposed greedy approach in terms of the bounded approximation guarantee. According to Theorem~\ref{th:subgg}, the performance of the greedy algorithm for non-submodular function can be evaluated by its submodularity ratio $\gamma$ and curvature $\alpha$. Here, we derive the bounds for these two quantities of Problem~\ref{prob:fomi} as stated in the following lemma. 
\begin{lemma}
	\label{lem:sub}
 	The submodularity ratio $\gamma$ of set function $C(S)=\rho(\calG-S)-\rho(\calG)$ is bounded by
	\begin{equation}
		\label{eq:lowgamma}
		1 > \gamma \ge \kh{\frac{1}{1+\lambda_{n-1}(\LL)}}^2,
	\end{equation}
	and its curvature $\alpha$ is bounded by
	\begin{equation}
		\label{eq:upalpha}
		0 < \alpha \le 1- \kh{\frac{1}{1+\lambda_{n-1}(\LL)}}^2,
	\end{equation}
	where $\lambda_{n-1}(\LL)$ is the largest  eigenvalue of Laplacian matrix $\LL$.
\end{lemma}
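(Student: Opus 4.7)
My plan is to prove both bounds from a single closed-form expression for the marginal gain of $C$, combined with a simple spectral localization of the forest matrix. For any $e \notin T$, applying Lemma~\ref{lem:dpedc} to the graph $\calG - T$ gives
\[
\Theta_e(T) = \frac{n\,w_e\,\bb_e^\top \OM_T^2 \bb_e}{1 - w_e\,\bb_e^\top \OM_T \bb_e},
\]
where $\OM_T \defeq (\II + \LL_T)^{-1}$ is the forest matrix of $\calG - T$. A Woodbury identity applied to the rank-$|S|$ update $\LL_T - \BB_S^\top \WW_S \BB_S$ upgrades this to the set-level form $\Theta_S(T) = n\,\trace{K_S\,(\WW_S^{-1} - \MM_S)^{-1}}$ with $\MM_S \defeq \BB_S \OM_T \BB_S^\top$ and $K_S \defeq \BB_S \OM_T^2 \BB_S^\top$. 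Since $\LL_T \preceq \LL$ for every $T \subseteq E$, the spectrum of $\OM_T$ is uniformly trapped in $[c,1]$ with $c \defeq 1/(1+\lambda_{n-1}(\LL))$, giving the workhorse two-sided estimate $c^2\,\|\bb\|^2 \leq \|\OM_T\bb\|^2 \leq \|\bb\|^2$ for every vector $\bb$.

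For the curvature, I would fix $j \in T \setminus S$ and set $U \defeq T\setminus\{j\}$, $V \defeq U\cup S \supseteq U$, noting $j \notin V$. The ratio of the closed forms factors as
\[
\frac{\Theta_j(V)}{\Theta_j(U)} = \frac{\|\OM_V \bb_j\|^2}{\|\OM_U \bb_j\|^2} \cdot \frac{1 - w_j\,\bb_j^\top \OM_U \bb_j}{1 - w_j\,\bb_j^\top \OM_V \bb_j}.
\]
The first factor is at least $c^2$ by the workhorse estimate (lower bound in the numerator, upper bound in the denominator), and the second is at least $1$ because $\OM_V \succeq \OM_U$ forces $\bb_j^\top \OM_U \bb_j \leq \bb_j^\top \OM_V \bb_j$. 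Hence $\Theta_j(V) \geq c^2\,\Theta_j(U)$, and the definition of curvature delivers $\alpha \leq 1 - c^2$.

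For the submodularity ratio, the plan is to upper bound $\Theta_S(T)$ by a diagonal sum that simultaneously lower bounds $\sum_{e \in S}\Theta_e(T)$. Conjugating by $\WW_S^{1/2}$ rewrites $\WW_S^{-1} - \MM_S = \WW_S^{-1/2}(\II - \widetilde\MM_S)\WW_S^{-1/2}$ with $\widetilde\MM_S \defeq \WW_S^{1/2}\MM_S \WW_S^{1/2}$, so the crucial operator inequality becomes $\widetilde\MM_S \preceq (1-c)\,\II$. I would derive it by starting from $\BB_S^\top \WW_S \BB_S \preceq \LL_T$ (the statement that $\calG - T - S$ is still a valid graph), conjugating by $\OM_T^{1/2}$ to obtain
\[
\OM_T^{1/2}\BB_S^\top \WW_S \BB_S \OM_T^{1/2} \preceq \OM_T^{1/2}\LL_T\OM_T^{1/2} = \II - \OM_T \preceq (1-c)\,\II,
\]
and then transferring the spectrum to $\widetilde\MM_S$ via the coincidence of nonzero spectra of $AB$ and $BA$ with $A = \WW_S^{1/2}\BB_S \OM_T^{1/2}$, $B = \OM_T^{1/2}\BB_S^\top \WW_S^{1/2}$. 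It follows that $(\WW_S^{-1} - \MM_S)^{-1} \preceq (1/c)\,\WW_S$, hence
\[
\Theta_S(T) \leq \frac{n}{c}\sum_{e \in S} w_e\,\|\OM_T \bb_e\|^2,
\]
while the trivial estimate $1 - w_e\bb_e^\top\OM_T\bb_e \leq 1$ yields $\sum_{e \in S}\Theta_e(T) \geq n\sum_{e \in S} w_e\,\|\OM_T\bb_e\|^2$. Dividing gives $\gamma \geq c \geq c^2$.

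The main obstacle is the operator inequality $\widetilde\MM_S \preceq (1-c)\,\II$: it hinges on the identity $\OM_T^{1/2}\LL_T\OM_T^{1/2} = \II - \OM_T$ and on the spectrum-sharing between $AB$ and $BA$. Once that is in place, the submodularity-ratio bound is immediate from two crude termwise estimates, and the curvature bound reduces to two one-line operator manipulations once the closed form of $\Theta_e(T)$ is on the page.
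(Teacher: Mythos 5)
Your proposal is correct, but it reaches the two bounds by a genuinely different route than the paper. The paper never forms closed-form marginal gains for sets: it sandwiches $\Theta_{T}(S)$ directly between eigenvalue/trace expressions, writing $\Theta_{T}(S)=n\sum_{i}\big(\tfrac{1}{1+\lambda_i(\LL_{S\cup T})}-\tfrac{1}{1+\lambda_i(\LL_{S})}\big)$, using eigenvalue monotonicity under edge deletion together with the identity $\trace{\LL_{S}}-\trace{\LL_{S\cup T}}=2\sum_{e\in T\backslash S}w_e$, so that every marginal gain (single edge or set) is trapped between $\tfrac{2n\sum w_e}{(1+\lambda_{n-1}(\LL))^{2}}$ and $2n\sum w_e$; the ratio bounds for $\gamma$ and $\alpha$ then drop out immediately. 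You instead work with the Sherman--Morrison/Woodbury closed forms $\Theta_e(T)=\tfrac{nw_e\bb_e^\top\OM_T^2\bb_e}{1-w_e\bb_e^\top\OM_T\bb_e}$ and $\Theta_S(T)=n\,\trace{K_S(\WW_S^{-1}-\MM_S)^{-1}}$, plus the operator localizations $c\II\preceq\OM_T\preceq\II$ and $\WW_S^{1/2}\BB_S\OM_T\BB_S^\top\WW_S^{1/2}\preceq(1-c)\II$ (the latter via $\OM_T^{1/2}\LL_T\OM_T^{1/2}=\II-\OM_T$ and the $AB$/$BA$ spectrum-sharing trick) — all of these steps check out, including the monotonicity $\OM_V\succeq\OM_U$ used in the curvature factorization. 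What each approach buys: the paper's argument is more elementary (only eigenvalue interlacing and trace identities, no block inversion), whereas yours gives finer control and in fact a strictly sharper submodularity-ratio bound $\gamma\ge c=\tfrac{1}{1+\lambda_{n-1}(\LL)}$, which you then relax to $c^2$ to match the statement; the curvature bound coincides. Two small bookkeeping points: in the definition of $\gamma$ the sum runs over $e\in S\backslash T$, so you should either assume $S\cap T=\emptyset$ without loss of generality (justified by $\Theta_S(T)=\Theta_{S\backslash T}(T)$) or replace $\BB_S$ by the incidence matrix of $S\backslash T$ throughout; and your second curvature factor being at least $1$ relies on $1-w_j\bb_j^\top\OM_V\bb_j>0$, which holds because $j$ is still an edge of $\calG-V$ (the same fact the paper invokes as $\rho_e<1/w_e$) and is worth stating explicitly.
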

\begin{proof}
	Let  $S$ and $T$ be two arbitrary subsets of $E$. To begin with, we provide a lower and upper bound for the marginal benefit function $\Theta_{T}(S)=C(S\cup T)-C(S)$, respectively.  We use $\Omega_{S}$ to denote the forest matrix of the resultant graph $\calG-S$. On the one hand,
	\begin{small}
	\begin{align*}
		\Theta_{T}(S) & =C(S\cup T)-C(S)  \\
		& = n\kh{\trace{\Omega_{S\cup T}} - \trace{\Omega_{S}}}\\
		& = n \sum_{i=1}^{n-1}\frac{1}{1+\lambda_i(\LL_{S\cup T})} - \frac{1}{1+\lambda_i(\LL_{S})} \\
		& \ge n \sum_{i=1}^{n-1}\frac{\lambda_i(\LL_{S}) - \lambda_i(\LL_{S\cup T})}{\kh{1+\lambda_{n-1}(\LL_{S})}\kh{1+\lambda_{n-1}(\LL_{S\cup T})}}\\
		& \ge n\frac{\trace{\LL_{S}}-\trace{\LL_{S\cup T}}}{\kh{1+\lambda_{n-1}(\LL_{S})}\kh{1+\lambda_{n-1}(\LL_{S\cup T})}} \\
		& = \frac{2n\sum_{e \in T\backslash S}w_e}{\kh{1+\lambda_{n-1}(\LL_{S})}\kh{1+\lambda_{n-1}(\LL_{S\cup T})}} 
	\end{align*}
	\end{small}
	On the other hand,
	\begin{small}
	\begin{align*}
	\Theta_{T}(S) & = n \sum_{i=1}^{n-1}\frac{\lambda_i(\LL_{S})-\lambda_i(\LL_{S\cup T})}{\kh{1+\lambda_{n-1}(\LL_{S})}\kh{1+\lambda_{n-1}(\LL_{S\cup T})}}\\
	& \le n\frac{\trace{\LL_{S}}-\trace{\LL_{S\cup T}}}{\kh{1+\lambda_1(\LL_{S})}\kh{1+\lambda_1(\LL_{S\cup T})}}\\
	& = \frac{2n\sum_{e \in T\backslash S}w_e}{\kh{1+\lambda_1(\LL_{S})}\kh{1+\lambda_1(\LL_{S\cup T})}}.
	\end{align*}
	\end{small}
	Combining  the above two bounds together, we derive the lower bound of the submodular ratio $\gamma$
	\begin{small}
	\begin{align*}
		\frac{\sum_{e\in T\backslash S}\Theta_e(S)}{\Theta_{T}(S)} 
		 \ge \sum_{e\in T\backslash S} \frac{2n w_e}{\kh{1+\lambda_{n-1}(\LL_{S})}\kh{1+\lambda_{n-1}(\LL_{S\cup T})}} \times\\ \frac{\kh{1+\lambda_1(\LL_{S})}\kh{1+\lambda_1(\LL_{S\cup T})}}{2n\sum_{e \in T\backslash S}w_e} 
		\ge \kh{\frac{1}{1+\lambda_{n-1}(\LL)}}^2,
	\end{align*}
	\end{small}
	which yields~\eqref{eq:lowgamma}.
	
	In a similar way, we derive an upper bound of the curvature $\alpha$. Let $j$ be any candidate edge in $S\backslash T$. Then, we have
	\begin{small}
	\begin{align*}
		\frac{\Theta_j(S\backslash j\cup T)}{\Theta_j(S\backslash j)} \ge \frac{2n w_j}{\kh{\lambda_{n-1}(\LL_{S\backslash j\cup T})}\kh{\lambda_{n-1}(\LL_{S\cup T})}}\times\\ \frac{\kh{1+\lambda_{1}(\LL_{S\backslash j})}\kh{1+\lambda_{1}(\LL_{S})}}{2n w_j} \ge \kh{\frac{1}{1+\lambda_{n-1}(\LL)}}^2,
	\end{align*}
	\end{small}
	which combining with the curvature definition in~\eqref{eq:alpha} completes the proof of~\eqref{eq:upalpha}. 
\end{proof}

 Lemma~\ref{lem:sub}, together with the approximation guarantee stated in Theorem~\ref{th:subgg}, leads to a performance analysis for the greedy algorithm. It is worth mentioning that in practice the greedy approach has been shown to perform often very close to the optimal solutions. 

\section{Nearly-Linear Time Approximation algorithm}
While computation time of Algorithm~\ref{alg:SpG} is significantly reduced compared with the na\"{\i}ve algorithm,  it is still  computationally difficult for large networks with millions of nodes. Specifically, as can be seen from Eq.~(\ref{eq:dpdec}), it remains questionable how to efficiently compute $C(e)$,  which requires computing the inverse of  matrix $\II+\LL$. To address this issue, we first present $C(e)$ in form of $\ell_2$ norm of vectors, which enables us to greatly facilitate the evaluation of  $C(e)$ by   several integrated strategies without needing matrix inversion. Then, we  propose an efficient approximation algorithm to solve the problem in time $\Otil (km\eps^{-2})$ where the $\Otil (\cdot)$ notation suppresses the ${\rm poly} (\log n)$ factors. We below provide its mathematical framework, which leads to an $\eps$-approximation estimator of $C(e)$, and then present the outline of $\FastGreedy$  algorithm.
\subsection{Mathematical Framework}
For Algorithm~\ref{alg:SpG}, the main difficulty lies in calculating  $C(e)$ for each candidate edge $e$.  According to Eq.~\eqref{eq:dpdec},  to evaluate $C(e)$, we need to estimate two terms $\trace{\OM\bb_e\bb_e^\top \OM}$  and $\bb_e^\top \OM \bb_e$ in the numerator and denominator,  respectively. In order to facilitate our analysis, we explicitly represent the concerned quantities in $\ell_2$ norm of vectors. Note that the forest distance $\rho_e$ between two end nodes of the deleted edge $e$ can be written in an Euclidean norm as
\begin{small}
\begin{align*}
	\rho_e=& \bb_e^\top \OM \bb_e
	= \bb_e^\top \OM (\II+\LL) \OM \bb_e  
	= \bb_e^\top \OM (\II+\BB^\top \WW \BB) 
	\OM \bb_e  \\
	=&
	\norm{\OM \bb_e}^2 +
	\norm{\WW^{1/2}\BB \OM \bb_e}^2.
\end{align*}
\end{small}
The term $\trace{\OM\bb_e \bb_e^\top \OM}$ can  be also recast as
\begin{align*}
    \trace{\OM\bb_e \bb_e^\top \OM}=&\bb_e^\top \OM^2 \bb_e=\norm{\OM\bb_e}^2.
\end{align*}
In this way,  the estimation of each part of $C(e)$ has been reduced to the calculation of the $\ell_2$ norms $\norm{\OM \bb_e}^2$ and $\norm{\WW^{1/2}\BB \OM \bb_e}^2$ of vectors in $\mathbb{R}^{m}$ and $\mathbb{R}^{n}$.  To alleviate the high computational cost of exactly computing these two $\ell_2$ norms,  we exploit the JL lemma~\cite{JoLi84,Ac01} to reduce the dimensions. JL lemma states that if we project a set of vectors $\vv_1, \vv_2, \cdots, \vv_n \in \mathbb{R}^d$  (like the columns of matrix $\OM$) onto a low $p$-dimensional
subspace spanned by the columns of a random matrix $\RR_{p \times d}$
with entries $\pm 1/\sqrt{p}$
, where $p \geq 24\log n/\eps^2$ for given $\eps$, then  the distances between the vectors in the set are nearly preserved with tolerance $1 \pm \eps$, but  the computational cost could be significantly reduced. Formally, given a random matrix $\RR$ and a vector set $\vv$,
\begin{align*}
	  (1-\eps)\norm{\vv_i - \vv_j}^2\leq	\norm{\RR \vv_i - \RR \vv_j}^2 \leq (1+\eps)\norm{\vv_i - \vv_j}^2,
\end{align*}
holds with probability at least $1 - 1/n$.

Let $\QQ_{p\times m}$ and $\PP_{p\times n}$ be two random $\pm1/\sqrt{p}$ matrices where $p=\ceil{24\log n/ (\frac{\eps}{12})^2}$. Following the JL lemma,
\begin{small}
\begin{align*}
    \norm{\WW^{1/2}\BB \OM\bb_e}^2  
    \overset{\eps/12}\approx \norm{\QQ \WW^{1/2} \BB \OM\bb_e}^2 \text{and}
    \norm{\OM\bb_e}^2 \overset{\eps/12}\approx \norm{\PP\OM\bb_e}^2
\end{align*}
\end{small}hold for every edge $e$ with probability at least $1-1/n$. 

However, direct computation of the above $\ell_2$ norms remains challenging since it involves matrix inversion, leading to a run-time of $O(n^3)$. Here, instead of computing this inverse directly, we resort to a nearly-linear time estimator~\cite{SpTe14} to solve some linear equations. Recalling that $\OM=(\II+\LL)^{-1}$, the product $\PP\OM$ is thus a solution to the linear system $\XX (\II+\LL)=\PP$. For a matrix $\XX$, we use $\XX_i$ to denote its $i$-th row vector. Then, we turn to solve a linear system of only $p=O(\log n)$ equations $(\II+\LL)\XX_i=\PP_i$, instead of $n$ systems that are required for computing $(\II+\LL)^{-1}$. Thanks to the fact that $\II+\LL$ is a symmetric, diagonally-dominant M-matrix (SDDM), the solution of each linear system  could be obtained efficiently by the SDDM linear system estimator \cite{SpTe14}, which relies on the approach of preconditioned conjugate gradients to give the unique solution $\XX_i=(\II+\LL)^{-1}\PP_i$. Formally, given an SDDM matrix $\SSS_{n\times n}$ with $m$ nonzero entries, a vector $\bb \in \mathbb{R}^n$,  and an error parameter $\delta > 0$,  the SDDM linear system estimator returns a vector $\yy = \Solver(\SSS,  \bb,  \delta)$, satisfying
\begin{align}\label{solver}
 \norm{\yy - \SSS^{-1} \bb}_{\SSS} \leq \delta \norm{\SSS^{-1} \bb}_{\SSS}
\end{align}
with  probability at least $1-1/n$,  where $\norm{\yy}_{\SSS} \defeq \sqrt{\yy^\top \SSS \yy}$. The estimator runs in expected time $\Otil (m)$. 

We below utilize the estimator in (\ref{solver}), to provide an approximations to the term $\norm{ \WW^{1/2}\BB \OM \bb_e}^2$ in~\eqref{eq:dpdec}. Specifically, we establish this estimator, which implies, for any small $\eps>0$, we could choose $\delta$ properly so that the approximation error can be bounded by $\eps$ with probability at least $1-1/n$. To proceed, let $\XX=\QQ \WW^{1/2} \BB$, $\overline{\XX}= \WW^{1/2}\BB\OM$, $\XX'=\QQ \overline{\XX}$, let $\XXtil_i=\Solver(\II+\LL, \XX_i,  \delta_1)$, and choose $\delta_1$ such that
\begin{small}\begin{align}\label{del1}
		\delta_1 \leq 
		\frac{\eps\wmin\sqrt{2(1-\eps/12)\wmin}}{64\wmax n(n+1)\sqrt{(1+\eps/12)(n\wmax+1)n}}.
		\end{align}
	\end{small} 
We then show the constructed estimator $\norm{\XXtil \bb_e}^2$ achieves an $(\eps/3)$-approximation to the  term $\norm{ \WW^{1/2}\BB \OM \bb_e}^2$ as stated in the following lemma. 
	\begin{lemma}\label{lem:appro2}
		For any  parameter $\epsilon \in (0, \frac{1}{2})$, suppose that
			\begin{align*}
		(1-\frac{\eps}{12})\norm{\overline{\XX} \ee_u}^2 \leq \norm{\XX' \ee_u}^2\leq (1+\frac{\eps}{12})\norm{\overline{\XX}  \ee_u}^2 
		   \end{align*}
	holds for any node $u \in V$ and that
			\begin{align*}
		(1-\frac{\eps}{12})\norm{\overline{\XX}  \bb_e}^2 \leq \norm{\XX' \bb_e}^2\leq (1+\frac{\eps}{12})\norm{\overline{\XX} \bb_e}^2 		
		 \end{align*}
	holds for any pair of  edge $e$ in $E$.
		Then, the following relation holds:
		\begin{align*}
		\norm{\WW^{1/2}\BB \OM\bb_e}^2=\norm{\overline{\XX} \bb_e}^2 \overset{\frac{\eps}{3}}\approx  \norm{\XXtil \bb_e}^2.
		\end{align*}
	\end{lemma}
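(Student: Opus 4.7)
The plan is to bridge the two sides of the approximation through the intermediate matrix $\XX' = \QQ \overline{\XX}$: by the second JL hypothesis of the lemma,
\[
\norm{\XX' \bb_e}^2 \;\overset{\eps/12}{\approx}\; \norm{\overline{\XX} \bb_e}^2,
\]
so it suffices to show that the SDDM-solver step satisfies $\norm{\XXtil \bb_e}^2 \overset{\eps/4}{\approx} \norm{\XX' \bb_e}^2$. Composing the two multiplicative errors then gives the claim, since $(1\pm\eps/12)(1\pm\eps/4) \subseteq (1\pm\eps/3)$ for $\eps \in (0,1/2)$.

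\emph{Step 1: From $(\II+\LL)$-norm to Euclidean norm.} For each $i \in \{1,\ldots,p\}$, the Solver output $\XXtil_i$ approximates $\OM \XX_i^\top$ and, by~\eqref{solver}, obeys $\norm{\XXtil_i - \OM \XX_i^\top}_{\II+\LL} \le \delta_1 \norm{\OM \XX_i^\top}_{\II+\LL}$. Because $\II+\LL \succeq \II$ the energy norm dominates the $\ell_2$ norm on the left, and because $\OM \preceq \II$ we have $\norm{\OM \XX_i^\top}_{\II+\LL}^2 = \XX_i \OM \XX_i^\top \le \norm{\XX_i}^2$ on the right. Hence every row of $\XXtil - \XX'$ has Euclidean length at most $\delta_1 \norm{\XX_i}$.

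\emph{Step 2: From rows to the target vector.} Writing $\bb_e = \ee_u - \ee_v$ and applying Cauchy--Schwarz row by row,
\[
\norm{(\XXtil - \XX')\bb_e}^2 \;\le\; 2\delta_1^2 \sum_{i=1}^p \norm{\XX_i}^2 \;=\; 2\delta_1^2 \norm{\XX}_F^2,
\]
where $\norm{\XX}_F^2 = \norm{\QQ \WW^{1/2} \BB}_F^2$ can be crudely bounded in terms of $n$, $m$, $\wmax$, and $p$ using the $\pm 1/\sqrt p$ structure of $\QQ$ together with the basic identity $\norm{\WW^{1/2}\BB}_F^2 = 2\sum_{e} w_e \le 2 m \wmax$.

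\emph{Step 3: From additive to multiplicative.} Using the elementary identity $\bigl|\norm a^2 - \norm b^2\bigr| \le \norm{a-b}\bigl(\norm a + \norm b\bigr)$, it remains only to divide the additive bound of Step~2 by a lower bound on $\norm{\XX' \bb_e}^2$. The JL hypothesis reduces this to a lower bound on $\norm{\overline{\XX} \bb_e}^2 = \bb_e^\top \OM \LL \OM \bb_e$, which I would estimate via $\lambda_{\min}(\LL) \ge \wmin/n^2$ and $\lambda_{\max}(\LL) \le n\wmax$ (cf.\ Section~III-A) applied to the spectral decomposition of $\OM\LL\OM$ on the subspace orthogonal to $\one$.

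The main obstacle will be the constant accounting: tracking how $\wmin$, $\wmax$, $n$, and $\lambda_{\max}(\LL)$ combine on both sides so that the specific choice of $\delta_1$ in~\eqref{del1} cancels them and drives the Solver-induced multiplicative error strictly below $\eps/4$. Once this bookkeeping is carried out (with a union bound over the $p=O(\log n)$ invocations of the Solver, each failing with probability at most $1/n$), chaining with the $\eps/12$ JL factor completes the proof.
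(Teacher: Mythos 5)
Your proposal follows the same skeleton as the paper's proof (triangle inequality, solver error measured in the $(\II+\LL)$-norm, aggregation over the $p$ rows, a lower bound on $\norm{\XX'\bb_e}^2$ from $\lambda_{\min}\ge \wmin/n^2$ and $\lambda_{\max}\le n\wmax$, then additive-to-multiplicative conversion and chaining with the JL factor), but there is a genuine gap in Steps 1--2. By bounding $\norm{\XX'_i}_{\II+\LL}^2=\XX_i\OM\XX_i^\top\le\norm{\XX_i}^2$ you discard the structure too early and reduce everything to $\norm{\XX}_F=\norm{\QQ\WW^{1/2}\BB}_F$. This quantity is \emph{not} deterministically of order $\norm{\WW^{1/2}\BB}_F^2=2\sum_e w_e$: all you can use without extra probabilistic input is an operator-norm bound on $\QQ$ (at best $\norm{\QQ}_2\approx\sqrt{m/p}$, worst case $\sqrt{m}$), or the row-wise bound $\norm{\XX_i}^2\le n\wmax\norm{\QQ_i}^2=n\wmax m/p$, giving an additive error of order $\delta_1\sqrt{nm\wmax}$ or worse. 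The $\delta_1$ prescribed in~\eqref{del1} is calibrated to absorb a factor of order $n\sqrt{\wmax}$ only, so your relative error overshoots by roughly $\sqrt{m/n}$ (or $m/n$ with the cruder bound) and the claimed conclusion ``\eqref{del1} drives the solver-induced error below $\eps/4$'' fails on dense graphs; the sharper statement $\norm{\QQ\WW^{1/2}\BB}_F^2\approx 2\sum_e w_e$ holds only in expectation/with high probability and is not among the lemma's hypotheses. The paper closes exactly this hole differently: it keeps $\norm{\XX'_i}_{\II+\LL}$, pays a factor $\sqrt{n\wmax+1}$ via $\II+\LL\preceq(n\wmax+1)\II$, and then invokes the \emph{first} hypothesis of the lemma (the node-level condition $\norm{\XX'\ee_u}^2\approx\norm{\overline{\XX}\ee_u}^2$, which your proof never uses) together with $\OM\LL\OM\preceq\OM$ to get $\norm{\XX'}_F^2\le(1+\eps/12)\trace{\OM}\le(1+\eps/12)n$ --- a bound independent of $m$, which is precisely what makes~\eqref{del1} sufficient.

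A second, minor slip: the composition $(1\pm\eps/12)(1\pm\eps/4)\subseteq(1\pm\eps/3)$ is false, since $(1+\eps/12)(1+\eps/4)=1+\eps/3+\eps^2/48>1+\eps/3$. The paper instead forces the solver-induced relative error down to $\eps/32$ at the level of norms, hence $\le\eps/12$ at the level of squared norms, and uses $(1\pm\eps/12)^2\subseteq(1\pm\eps/3)$. Your Step 1 observation $\norm{\OM\XX_i^\top}_{\II+\LL}^2=\XX_i\OM\XX_i^\top\le\norm{\XX_i}^2$ is correct and elegant, but to salvage the argument with the given $\delta_1$ you must re-route through $\norm{\XX'}_F$ and the node-level hypothesis as above, rather than through $\norm{\XX}_F$.
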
	
\begin{proof}
We observe that
 	\begin{small}
 	\begin{align*}
 	&\abs{\norm{\XXtil\bb_e}-\norm{\XX'\bb_e}}\overset{(a)}\leq  \norm{(\XXtil-\XX')\bb_e}  \overset{(b)}\leq 2\norm{\XXtil-\XX'}_F \\
 	=&2\sqrt{\sum_{i=1}^{p}\norm{\XXtil_i-\XX'_i}^2} \overset{(c)}\leq 2\sqrt{\sum_{i=1}^{p}\norm{\XXtil_i-\XX'_i}_{\II+\LL}^2}\\
 	\leq & 2\sqrt{\delta_1^2\sum_{i=1}^{p}\norm{\XX'_i}_{\II+\LL}^2} \overset{(d)}\leq 2\delta_1\sqrt{n\wmax+1}\norm{\XX'}_{F}\\
 	\leq & 2\delta_1\sqrt{n\wmax+1}\sqrt{\sum_{i=1}^{n}(1+\frac{\eps}{12})\norm{\bar{\XX} \ee_i}^2}\\
 	\leq & 2\delta_1\sqrt{n\wmax+1}\sqrt{1+\frac{\eps}{12}}\sqrt{\sum_{i=1}^{n}{\ee_i^\top \OM \ee_i}}\\
 	\leq & 2\delta_1\sqrt{n\wmax+1}\sqrt{1+\frac{\eps}{12}}\sqrt{n},
 	\end{align*}
 	\end{small}
 	where (a) and (b) follow from the triangle inequality. The inequality (c) holds since $\II \preceq \II+\LL$, and (d) holds because $\LL \preceq (n\wmax+1)\II$.
 
 	On the other hand, we  derive a lower bound of $\norm{\XX'\bb_e}^2$ as:
	\begin{align*}
	&\norm{\XX'\bb_e}^2\geq (1-\frac{\eps}{12})\norm{\overline{\XX}\bb_e}^2\\
	=&(1-\frac{\eps}{12})\bb_e^\top \OM \LL \OM \bb_e\geq (1-\frac{\eps}{12})\frac{\wmin}{\wmax^2n^2(n+1)^2}\norm{\bb_e}^2\\
	=&2(1-\frac{\eps}{12})\frac{\wmin}{\wmax^2n^2(n+1)^2}.
	\end{align*}
	Combining the above-obtained results, we further obtain
	\begin{small}
		\begin{align*}
			&\frac {\abs{\norm{\XXtil\bb_e}-\norm{\XX'\bb_e}}}{\norm{\XX'\bb_e}} \\
		\leq & \frac{2\delta_1\wmax \sqrt{n}n(n+1)\sqrt{(1+\eps/12)(n\wmax+1)}}{\sqrt{2\wmin(1-\eps/12)}}&\leq \frac{\eps}{32}.
		\end{align*}
	\end{small}
 It thus follows that
\begin{align*}
	&\abs{\norm{\XXtil\bb_e}^2-\norm{\XX'\bb_e}^2}\nonumber \\
	=&\abs{\norm{\XXtil\bb_e}-\norm{\XX'\bb_e}} \times \abs{\norm{\XXtil\bb_e}+\norm{\XX'\bb_e}}\nonumber\\
	\leq & \frac{\eps}{32}(2+\frac{\eps}{32})\norm{\XX'\bb_e}^2\leq \frac{\eps}{12}\norm{\XX'\bb_e}^2. \nonumber
\end{align*}
With the initial condition of the lemma, one gets
$\norm{\overline{\XX} \bb_e}^2 \overset{\frac{\eps}{3}}\approx  \norm{\XXtil \bb_e}^2$, which completes the proof.
\end{proof}

 The analysis on the term $\norm{\PP\OM \bb_e}^2$ can be conducted in a similar way as what follows. Let $\YY=\PP$, $\overline{\YY}=\OM$ and $\YY'=\PP \overline{\YY}$. Let $\YYtil_i=\Solver(\II+\LL, \YY_i, \delta_2)$ with 
\begin{small}\begin{align}\label{del2}
 	\delta_2 \leq
 	\frac{\eps\sqrt{2(1-\eps/12)\wmin}}{32(n\wmax+1)\sqrt{(1+\eps/12)(n\wmax+1)}}.\end{align}
\end{small} 
 Then, the following lemma provides an efficient approximation to $\norm{ \OM \bb_e}^2$.
 \begin{lemma}\label{lem:appro1}
 	For any  parameter $\epsilon \in (0, \frac{1}{2})$, suppose 
 	\[(1-\frac{\eps}{12})\norm{\overline{\YY} \ee_u}^2 \leq \norm{\YY' \ee_u}\leq (1+\frac{\eps}{12})\norm{ \overline{\YY} \ee_u}^2 \]
 	for any node $u \in V$ and 
 	\[(1-\frac{\eps}{12})\norm{ \overline{\YY} \bb_e}^2 \leq \norm{\YY' \bb_e}\leq (1+\frac{\eps}{12})\norm{ \overline{\YY} \bb_e}^2 \]
	for any pair of nodes $u$ and $v$ connecting an edge $e$ in $E$.
  	Then, the following relation holds:
 \begin{align}\label{ineq3}
 	\norm{ \OM\bb_e}^2=\norm{\overline{\YY} \bb_e}^2 \overset{\frac{\eps}{3}}\approx  \norm{\YYtil \bb_e}^2.
 \end{align}
 \end{lemma}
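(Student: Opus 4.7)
The proof of Lemma~\ref{lem:appro1} will follow the exact blueprint of the proof of Lemma~\ref{lem:appro2}, with the matrix $\overline{\YY}=\OM$ playing the role that $\overline{\XX}=\WW^{1/2}\BB\OM$ had before. The plan is to control $\norm{\YYtil\bb_e}^2$ by three successive approximations: (i) compare $\norm{\overline{\YY}\bb_e}^2$ with the JL-projected quantity $\norm{\YY'\bb_e}^2$ (already given by hypothesis up to factor $\eps/12$); (ii) compare $\norm{\YY'\bb_e}^2$ with $\norm{\YYtil\bb_e}^2$, where the approximation error is inherited from the SDDM \emph{Solver} tolerance $\delta_2$; (iii) compose these two $\eps/12$-type inequalities to obtain the claimed $\eps/3$-approximation in~\eqref{ineq3}.

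For step (ii), I would first apply the triangle inequality, $\abs{\norm{\YYtil\bb_e}-\norm{\YY'\bb_e}}\leq \norm{(\YYtil-\YY')\bb_e}\leq \sqrt{2}\,\norm{\YYtil-\YY'}_F$, because $\norm{\bb_e}=\sqrt{2}$. Writing the Frobenius norm as a row-wise sum, invoking $\II\preceq\II+\LL$ to pass to the $(\II+\LL)$-norm, and then using the \emph{Solver} guarantee~\eqref{solver} with parameter $\delta_2$ for each row $\YYtil_i=\Solver(\II+\LL,\YY_i,\delta_2)$, I obtain $\norm{\YYtil-\YY'}_F^2\leq \delta_2^2\sum_{i=1}^{p}\norm{\YY'_i}_{\II+\LL}^2$. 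Since $\LL\preceq n\wmax\II$, this is in turn bounded by $\delta_2^2(n\wmax+1)\norm{\YY'}_F^2$. Finally the JL hypothesis on standard basis vectors yields $\norm{\YY'}_F^2\leq (1+\eps/12)\trace{\OM^2}$, and the trivial spectral bound $\trace{\OM^2}=\sum_{i=0}^{n-1}(1+\lambda_i)^{-2}\leq n$ closes the upper estimate.

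For the lower bound on $\norm{\YY'\bb_e}$, I would again use the JL hypothesis to write $\norm{\YY'\bb_e}^2\geq (1-\eps/12)\bb_e^\top\OM^2\bb_e$, and then obtain a quantitative spectral floor for $\bb_e^\top\OM^2\bb_e$ from the eigendecomposition $\OM^2=\sum_{i\geq 0}(1+\lambda_i)^{-2}\uu_i\uu_i^\top$ together with $\bb_e\perp\one$ and $\lambda_{n-1}\leq n\wmax$. Combining this lower bound with the upper bound from the previous paragraph and plugging in the definition of $\delta_2$ from~\eqref{del2} yields
\begin{align*}
\frac{\abs{\norm{\YYtil\bb_e}-\norm{\YY'\bb_e}}}{\norm{\YY'\bb_e}}\leq \frac{\eps}{32}.
\end{align*}
Passing from the $\ell_2$ norm to its square via the identity $a^2-b^2=(a-b)(a+b)$, I can then upgrade this to $\abs{\norm{\YYtil\bb_e}^2-\norm{\YY'\bb_e}^2}\leq \frac{\eps}{12}\norm{\YY'\bb_e}^2$. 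Composing this $\eps/12$-approximation with the JL-level $\eps/12$-approximation between $\norm{\overline{\YY}\bb_e}^2$ and $\norm{\YY'\bb_e}^2$ gives the final $\eps/3$-approximation claim.

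The main obstacle I expect is calibrating the spectral lower bound on $\bb_e^\top\OM^2\bb_e$ so that it matches, up to constants, the exact form of $\delta_2$ declared in~\eqref{del2}. The other ingredients—triangle inequality, Frobenius-norm accounting, the semidefinite ordering $\II\preceq\II+\LL\preceq (n\wmax+1)\II$, and the JL hypothesis—are purely mechanical and identical to Lemma~\ref{lem:appro2}. Everything else is just careful bookkeeping of constants to absorb cross terms of order $\eps^2$ into the final $\eps/3$ bound.
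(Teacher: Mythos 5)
Your plan is exactly the route the paper intends: the paper never writes out a separate proof of Lemma~\ref{lem:appro1}, it only declares that the analysis ``can be conducted in a similar way'' as Lemma~\ref{lem:appro2}, and your three-step chain (JL hypothesis, \Solver{} error controlled through the triangle inequality, row-wise Frobenius accounting and $\II\preceq\II+\LL\preceq(n\wmax+1)\II$, then composition of two $\eps/12$-level errors into $\eps/3$) is that same chain with $\overline{\YY}=\OM$ replacing $\WW^{1/2}\BB\OM$. The individual ingredients are sound, including $\trace{\OM^2}\le\trace{\OM}\le n$, the final squaring step via $a^2-b^2=(a-b)(a+b)$, and the spectral floor $\bb_e^\top\OM^2\bb_e\ge 2/(1+n\wmax)^2$ (for which $\bb_e\perp\one$ is not even needed, since $\OM$ is positive definite).

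The step you explicitly leave open, however, does not close as asserted. With your own bounds, the relative error is at most $2\delta_2(n\wmax+1)\sqrt{(n\wmax+1)(1+\eps/12)\,n}\big/\sqrt{2(1-\eps/12)}$, and substituting the $\delta_2$ of \eqref{del2} gives a quantity of order $\eps\sqrt{n\wmin}/16$, not $\eps/32$: the $\delta_2$ declared in \eqref{del2} is larger, by a factor of roughly $2\sqrt{n\wmin}$, than what your chain needs, namely $\delta_2\le \eps\sqrt{2(1-\eps/12)}\big/\bigl(64(n\wmax+1)\sqrt{(1+\eps/12)(n\wmax+1)\,n}\bigr)$. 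Note that \eqref{del1} does carry the extra $\sqrt{n}$ in its denominator while \eqref{del2} does not, and the $\sqrt{\wmin}$ appearing in \eqref{del2} has no natural source in the $\YY$-analysis, since $\overline{\YY}=\OM$ involves no edge weights; this strongly suggests \eqref{del2} is a loose transcription of the $\delta_1$ formula rather than the output of a genuine calibration. So either you must produce a lower bound on $\bb_e^\top\OM^2\bb_e$ substantially sharper than the spectral floor (none is apparent), or you must tighten $\delta_2$ as above, after which your argument goes through verbatim. This defect traces back to the paper's own statement of \eqref{del2} rather than to your strategy, but as written your claim that plugging in the declared $\delta_2$ ``yields'' the $\eps/32$ ratio does not follow from the estimates you set up.
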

Based on the results obtained in  Lemmas~\ref{lem:appro2} and Lemmas~\ref{lem:appro1}, we are able to approximate the forest distance $\rho_{e}$ by $\norm{\XXtil \bb_e}^2+\norm{\YYtil \bb_e}^2$ satisfying
\begin{align*}
\rho_{e} \overset{\frac{\eps}{3}}\approx \norm{\XXtil \bb_e}^2+\norm{\YYtil \bb_e}^2.
\end{align*}

\subsection{Nearly-Linear Time Algorithm }
The above-obtained results considerably provide a provable approximation guarantee to each part of $C(e)$. Now we are ready to propose an algorithm $\ApproxC$ to approximate $C(e)$ for every edge $e$ in the candidate set $E$. The outline of  algorithm  $\ApproxC$ is shown in Algorithm \ref{alg:comp}, with performance given by the following theorem. 
\begin{theorem}\label{lem:comp}
For $\epsilon\in(0,\frac{1}{2})$, the value $\hat{C}(e)$ returned by $\ApproxC$ satisfies
\begin{align}
		\hat{C}(e) \overset{\frac{\eps}{3}}\approx C(e).
\end{align}
with probability almost $1-1/n$.
\end{theorem}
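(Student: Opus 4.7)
The plan is to chain together the estimators from Lemmas~\ref{lem:appro1} and~\ref{lem:appro2} through the closed-form expression for $C(e)$ given in Lemma~\ref{lem:dpedc}, and then propagate the multiplicative errors through the ratio. First, I would recall the identities already established right before the two approximation lemmas, namely
\begin{align*}
    \bb_e^\top \OM^2 \bb_e &= \norm{\OM \bb_e}^2, \\
    \rho_e = \bb_e^\top \OM \bb_e &= \norm{\OM \bb_e}^2 + \norm{\WW^{1/2}\BB\OM\bb_e}^2,
\end{align*}
so that Lemma~\ref{lem:dpedc} rewrites the exact centrality as
\begin{equation*}
C(e) \;=\; \frac{n w_e \, \norm{\OM \bb_e}^2}{1 - w_e\!\left(\norm{\OM \bb_e}^2 + \norm{\WW^{1/2}\BB\OM\bb_e}^2\right)} .
\end{equation*}
The estimator \ApproxC{} produces exactly the plug-in version $\hat{C}(e) = \frac{n w_e \, \norm{\YYtil\bb_e}^2}{1 - w_e(\norm{\YYtil\bb_e}^2 + \norm{\XXtil\bb_e}^2)}$.

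Next, I would apply Lemma~\ref{lem:appro1} to control the numerator term $\norm{\OM\bb_e}^2 \overset{\eps/3}\approx \norm{\YYtil\bb_e}^2$, and both lemmas jointly to control the sum $\rho_e \overset{\eps/3}\approx \norm{\YYtil\bb_e}^2 + \norm{\XXtil\bb_e}^2$ (since each summand is within a $(1\pm\eps/3)$ factor of its target, so is any non-negative combination). A union bound over the two Johnson--Lindenstrauss projections and the calls to the SDDM \Solver{} keeps the overall failure probability below $1/n$ (using that $p=O(\log n /\eps^2)$ directions and $m$ candidate edges are involved, the constants in $\delta_1,\delta_2$ from~\eqref{del1}--\eqref{del2} are chosen precisely to absorb these).

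Then I would combine the two multiplicative bounds through the fraction. Writing $D \defeq 1 - w_e\rho_e > 0$ and $\hat D \defeq 1 - w_e(\norm{\YYtil\bb_e}^2+\norm{\XXtil\bb_e}^2)$, a short calculation gives
\begin{equation*}
\frac{\hat C(e)}{C(e)} \;=\; \frac{\norm{\YYtil\bb_e}^2}{\norm{\OM\bb_e}^2}\cdot\frac{D}{\hat D},
\end{equation*}
and each factor is within $1\pm \eps/3$ of $1$, yielding an $\eps/3$-approximation after a Taylor bound (rescaling the internal tolerance from $\eps/3$ down to, say, $\eps/12$ inside the lemmas exactly matches the budget already spent on the JL error in~\eqref{del1}, \eqref{del2}).

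The main obstacle is the last step: a relative error on $\rho_e$ does not automatically translate to a relative error on $D=1-w_e\rho_e$, because $D$ could be small when the deleted edge is a near-bridge. I would handle this by showing that $w_e \rho_e$ is bounded away from $1$ using $w_e \rho_e \leq w_e \bb_e^\top \OM \bb_e \leq 2w_e/(1+\lambda_{\min})$ (via the eigendecomposition of $\OM$, with the bound $\lambda_{\min}\geq \wmin/n^2$ quoted in the Preliminaries), so $D \geq c(n,\wmin,\wmax)$ is lower-bounded. Consequently an additive error of order $\eps\rho_e$ on the denominator translates to a multiplicative $1\pm O(\eps)$ factor on $\hat C(e)/C(e)$, and then tightening the JL parameter from $\eps$ to $\eps/12$ (which is exactly what $p=\lceil 24\log n/(\eps/12)^2\rceil$ achieves) compresses the total distortion down to $\eps/3$. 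Finally, the runtime claim follows from the $\Otil(m)$ cost of each of the $p=O(\log n/\eps^2)$ calls to \Solver{}, giving $\Otil(m/\eps^2)$ per edge.
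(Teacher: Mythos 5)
Your overall plan---rewrite $C(e)$ through the norm identities, plug in the estimators $\norm{\YYtil\bb_e}^2$ and $\norm{\XXtil\bb_e}^2$, and propagate multiplicative errors through the ratio---is exactly the route the paper intends: Theorem~\ref{lem:comp} is stated without a separate written proof and is meant to follow by combining Lemma~\ref{lem:appro2} and Lemma~\ref{lem:appro1} with the expression for $C(e)$ in Lemma~\ref{lem:dpedc} and the choices of $\delta_1,\delta_2$ in \eqref{del1}--\eqref{del2}. You also correctly isolate the real difficulty, namely that a relative error on $\rho_e$ does not automatically become a relative error on the denominator $D=1-w_e\rho_e$.

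However, your fix for that difficulty does not work. The bound $w_e\rho_e\le 2w_e/(1+\lambda_{\min})$ with $\lambda_{\min}\ge \wmin/n^2$ does not bound $w_e\rho_e$ away from $1$: for an unweighted graph ($w_e=1$) it only says $w_e\rho_e\lesssim 2$, which is vacuous, and near-bridge edges genuinely have $w_e\rho_e$ close to $1$. The correct lower bound on $D$ comes from the \emph{largest} eigenvalue, not the smallest: since $\LL\succeq w_e\bb_e\bb_e^\top$ one has $w_e\bb_e^\top\LL^{+}\bb_e\le 1$ (weighted effective resistance, $\LL^{+}$ the pseudoinverse), while $\bb_e^\top\LL^{+}\bb_e-\bb_e^\top\OM\bb_e=\sum_{i:\lambda_i>0}\frac{(\uu_i^\top\bb_e)^2}{\lambda_i(1+\lambda_i)}\ge \frac{2}{\lambda_{\max}(1+\lambda_{\max})}$, so that $D\ge \frac{2w_e}{\lambda_{\max}(1+\lambda_{\max})}\ge \frac{2\wmin}{n\wmax(n\wmax+1)}$. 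Thus $D$ is only inverse-polynomially bounded below, not bounded below by a constant, and your key sentence ``an additive error of order $\eps\rho_e$ on the denominator translates to a multiplicative $1\pm O(\eps)$ factor'' is precisely the step that fails: since $\rho_e$ can be of order $1$ while $D$ is of order $1/\mathrm{poly}(n)$, the additive error on $w_e\rho_e$ must be budgeted at scale $\eps D$, not $\eps\rho_e$, before it yields a relative error on $D$. This is the role of the inverse-polynomial factors of $n$, $\wmin$, $\wmax$ appearing in $\delta_1$ and $\delta_2$ in \eqref{del1}--\eqref{del2}, and any complete proof has to compare all error terms (solver and projection alike) against this lower bound on $D$ rather than against $\rho_e$. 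Until the denominator is controlled at that scale, the chain of $(1\pm\eps/3)$ factors you write down does not go through.
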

\begin{algorithm}
		\small
		\caption{$\ApproxC(\calG,\epsilon)$}
		\label{alg:comp}
		\Input{
			An undirected graph $\calG$; a real number $\epsilon\in(0,\frac{1}{2})$\\
		}
		\Output{
			$\{(e, \hat{C}(e) | e \in E\}$
			
		}
		Set $\delta_1$, $\delta_2$ according to Lemmas~\ref{lem:appro2} and~\ref{lem:appro1},
		\;
		$p \gets \ceil{24\log n/ (\frac{\eps}{12})^2}$ \;		
		Generate random Gaussian matrices
		$\PP_{p\times n},  \QQ_{p\times m}$\;
		Compute $\QQ\BB$
		by sparse matrix multiplication\;
		\For{$i = 1$ to $p$}{
			$\XXtil_i
			\gets \Solver(\II+\LL,
			\XX_i,  \delta_1)$\;
			$\YYtil_i
			\gets \Solver(\II+\LL,
			\YY_i,  \delta_2)$
		}
		
		\For{each $e\in S$}{
			compute $\hat{C}(e) =\frac{n w_e\norm{\YYtil\bb_e}^2}{1 - w_e( \norm{\XXtil\bb_e}^2 + \norm{\YYtil\bb_e}^2)}$}
		\Return $\{(e, \hat{C}(e)| e \in E\}$
	\end{algorithm}

Finally, the complete methodology $\FastGreedy(\calG, k, \epsilon)$ to solve Problem~\ref{prob:fomi} is given in Algorithm~\ref{alg:Appro}. The idea is still to greedily select one "best" edge in each round. Yet, we  exploit $\ApproxC$ to obtain the approximation of $\hat{C}(e)$ instead of directly computing $C(e)$ in each round, which only takes time $\Otil(m\eps^{-2})$. We then iteratively select the edge with the highest centrality score $\hat{C}(e)$ and update the solution set $S$ and graph $\calG$.  Thus,  the time complexity of  Algorithm~\ref{alg:Appro} is $\Otil (mk\eps^{-2})$.

\begin{algorithm}[htbp]
	\caption{$\FastGreedy(\calG, k, \epsilon)$}
	\label{alg:Appro}
	\Input{
		An undirected graph $\calG$; an integer $k \leq |E|$; a real number $0 \leq \epsilon \leq1/2$
	}
	\Output{
		$S$: a subset of $E$ and $|S| = k$
	}
	Initialize solution $S = \emptyset$ \;
	\For{$i = 1$ to $k$}{
		$\{e, \hat{C}(e) | e \in E \setminus S \} \gets \ApproxC(\calG, \epsilon)$ \;
		Select $e_i$ s.t.  $e_i \gets \mathrm{arg\, max}_{e \in E \setminus S} \hat{C}(e)$ \;
		Update solution $S \gets S \cup \{ e_i \}$ \;
		Update the graph $\calG \gets \calG- e_i$
	}
	\Return $S$
\end{algorithm}

\section{Experiments Evaluation}\label{S7}
In this section, we present numerical results to evaluate the performance and scalability of our two algorithms $\SpGreedy$ and $\FastGreedy$. For this purpose,  we design and conduct extensive experiments  on real-world networks of various types and scales to validate  the effectiveness and efficiency of our algorithms in a myriad of high-impact applications.

\begin{table*}
	\fontsize{7.0}{7.5}\selectfont
	\centering
	\caption{Running time (seconds, s) and relative error ($\times 10^{-2}$) of $\SpGreedy$ and $\FastGreedy$ for maxmizing forest index with $k=50$ and various $\eps$. }\label{table:eff}
	\resizebox{0.9\linewidth}{!}{
    \begin{tabular}{lrrccccccc}
    \toprule
\multicolumn{1}{l}{\multirow{2}{*}{Network}} & \multicolumn{1}{c}{\multirow{2}{*}{$n$}} & \multicolumn{1}{r}{\multirow{2}{*}{$m$}} & \multicolumn{4}{c}{Running time (s) for $\SpGreedy$ and $\FastGreedy$} & \multicolumn{3}{c}{Relative error ($\times10^{-2}$)} \\
\cmidrule(lr){4-7}
\cmidrule(lr){8-10}
\multicolumn{1}{c}{} & \multicolumn{1}{c}{} & \multicolumn{1}{c}{} & \multicolumn{1}{c}{$\SpGreedy$} & \multicolumn{1}{c}{$\eps=0.3$} & \multicolumn{1}{c}{$\eps=0.2$} & \multicolumn{1}{c}{$\eps=0.1$}  & \multicolumn{1}{c}{$\eps=0.3$} & \multicolumn{1}{c}{$\eps=0.2$} & \multicolumn{1}{c}{$\eps=0.1$} \\
\midrule
EmailUniv & 1133 & 5451 & 3.56 & 6.21 & 13.79 & 64.24 & 1.68 & 1.36 & 1.34 \\
Erdos992 & 5094 & 7515 & 195.26 & 56.67 & 127.27 & 541.72 & 9.63 & 8.16 & 4.12 \\
Bcspwr10 & 5300 & 8271 & 219.42 & 88.99 & 203.57 & 770.53 & 3.20 & 3.31 & 1.84 \\
Reality & 6809 & 7680 & 445.93 & 59.45 & 135.52 & 528.91 & 0.19 & 0.20 & 0.20 \\
PagesGovernment & 7057 & 89429 & 492.36 & 660.62 & 1477.58 & 5876.43 & 2.27 & 1.60 & 1.17 \\
Dmela & 7393 & 25569 & 559.98 & 221.56 & 494.50 & 1967.26 & 4.67 & 3.74 & 2.28 \\
HepPh & 11204 & 117619 & 1849.59 & 933.56 & 2102.74 & 8303.21 & 2.99 & 2.80 & 2.00 \\
Anybeat & 12645 & 49132 & 2678.16 & 430.35 & 971.64 & 3904.42 & 5.23 & 4.27 & 3.72 \\
PagesCompany & 14113 & 52126 & 3665.19 & 541.76 & 1197.31 & 4778.98 & 5.61 & 5.11 & 3.24 \\
AstroPh & 17903 & 196972 & 7344.40 & 1716.95 & 3852.60 & 15367.60 & 2.92 & 3.34 & 2.12 \\
CondMat & 21363 & 91286 & 12470.50 & 929.77 & 2093.50 & 8280.84 & 5.54 & 5.49 & 2.20 \\
Gplus & 23628 & 39194 & 16611.82 & 368.33 & 829.80 & 3281.90 & 4.19 & 4.00 & 2.98 \\
Douban$*$ & 154908 & 327162 & --- & 968.77 & 2162.07 & 8533.24 & ---  & ---  & --- \\
Gowalla$*$ & 196591 & 950327 & --- & 2993.00 & 6601.74 & 26197.87 & ---  & ---  & ---  \\
GooglePlus$*$ & 211187 & 1141650 & --- & 3933.51 & 8621.27 & 34279.85 & ---  & ---  & ---  \\
Citeseer$*$ & 227320 & 814134 & --- & 2620.55 & 5900.50 & 23530.06 & ---  & ---  & --- \\
MathSciNet$*$ & 332689 & 820644 & --- & 3277.76 & 7260.85 & 28696.81 & --- & --- & --- \\
TwitterFollows$*$ & 404719 & 713319 & --- & 2355.78 & 5177.25 & 20646.97 & --- & --- & --- \\
YoutubeSnap$*$ & 1134890 & 2987624 & --- & 13723.63 & 30466.71 & 120837.71 & --- & --- & --- \\
Lastfm$*$ & 1191805 & 4519330 & --- & 20829.48 & 46099.36 & 184029.38 & --- & --- & --- \\
Flixster$*$ & 2523386 & 7918801 & --- & 35293.35 & 78904.72 & 317482.33 & --- & --- & ---\\
\bottomrule
\end{tabular}
}
\end{table*}
\subsection{Experiment Setup}
\textbf{Datasets.}
The studied realistic networks are representatively selected from various domains,   including Criminal organizations (Crime-moreno), Epidemic contagion networks (Bio-diseasome), Biological networks (Bio-celegans), Social network (Douban, TwitterFollows) and Infrastructure networks (Inf-euroroad) and so on. The removal of edges may correspond to cutting down airlines, hyperlinks, communication channels or social links in the above-mentioned domains. The networks are publicly available in the KONECT~\cite{Ku13} and SNAP~\cite{LeSo16}. For each network,  we implement our experiments on its largest components. Table~\ref{table:eff} shows the relevant characteristics of  all networks.

\textbf{Machine and implementation.}
All algorithms in our experiments are implemented in Julia using a single thread.  In our algorithms,  we resort to the linear estimator $\Solver$~\cite{KySa16},  the Julia implementation  of  which is available on the website\footnote{https://github. com/danspielman/Laplacians. jl}.  All experiments are conducted on a machine equipped with  32G RAM and 4.2 GHz Intel i7-7700 CPU. 
The error parameter is set to $\eps= 0.3$ for the approximation algorithm $\FastGreedy$. Note that one can adjust $\eps$ to achieve a balance
between effectiveness and efficiency, where a smaller value of  $\eps$ corresponds to better effectiveness but relatively poor efficiency.

\textbf{Edge attack methods.}
The sets of $k$ edges are selected following eight different strategies: \emph{Optimum}, \emph{Random}, \emph{Betweenness}, \emph{DegProduct}, \emph{DegSum}, \emph{FastGreedy}, \emph{TopFEGC} and \emph{SpGreedy}. Note that there is no state of the art, as no existing methods can solve our optimization problem, thus we turn to  heuristic methods. \emph{Optimum} selects edge set $|T|=k$ with maximum $C(S)$ by brute-force search. \emph{Random} iteratively selects $k$ edges at random. \emph{FastGreedy} iteratively selects $k$ edges with maximum $\hat{C}(e)$  returned by algorithm $\FastGreedy$. \emph{Greedy} selects $k$ edges with maximum $C(e)$ returned by algorithm $\SpGreedy$. We also consider three classical edge-centrality-based attack strategies for
our experiments. \emph{Betweenness}, \emph{DegProduct} and \emph{DegSum} iteratively select $k$ edges with the highest betweenness~\cite{Br01}, the largest product of degrees of end-points, and the largest sum of degrees of end-points, respectively, in a greedy fashion. \emph{TopFEGC} selects the top-$k$ FEGC in one shot (not in a greedy fashion). The objective of the comparison between ~\emph{TopFEGC} and \emph{FastGreedy} is to verify the need of an iterative greedy method. 

\textbf{Evaluation metrics.}
The performance of all above-mentioned methods is evaluated with respect to the increased forest index $\Delta \rho(\calG)$ they achieve for the selected edge set. The larger $\Delta \rho(\calG)$ a method achieves, the more effective it is.
\subsection{Effectiveness Comparison}
We start with evaluating the effectiveness of proposed algorithms,  by comparing them with both the optimum solutions \emph{Optimum} and  random scheme \emph{Random}. To this end,  we  execute experiments on four small realistic networks: Tribes with $16$ nodes and $58$ edges, Southernwomen with $18$ nodes and $64$ edges, Karate with $34$ nodes and $78$ edges and Dolphins with $62$ nodes and $159$ edges. Note that these networks are relatively small, we are thus able to obtain the optimal set of edges. For each case,  we attack the network by removing $k=1,2,\ldots,6$ edges. Figure~\ref{ComOpt1} shows how the forest index is affected by the deletion of an incrementally larger set of edges. Each curve in the plot illustrates the gain by a different algorithm. We can draw the following observations. On one hand, the values returned by our two greedy algorithms and the optimum solution are almost the same so that the three curves are overlapped, demonstrating that our greedy algorithms perform much better than the theoretical guarantee. On the other hand, both of our algorithms perform significantly better than  the random scheme.
\begin{figure}
	\centering
	\includegraphics[width=\linewidth]{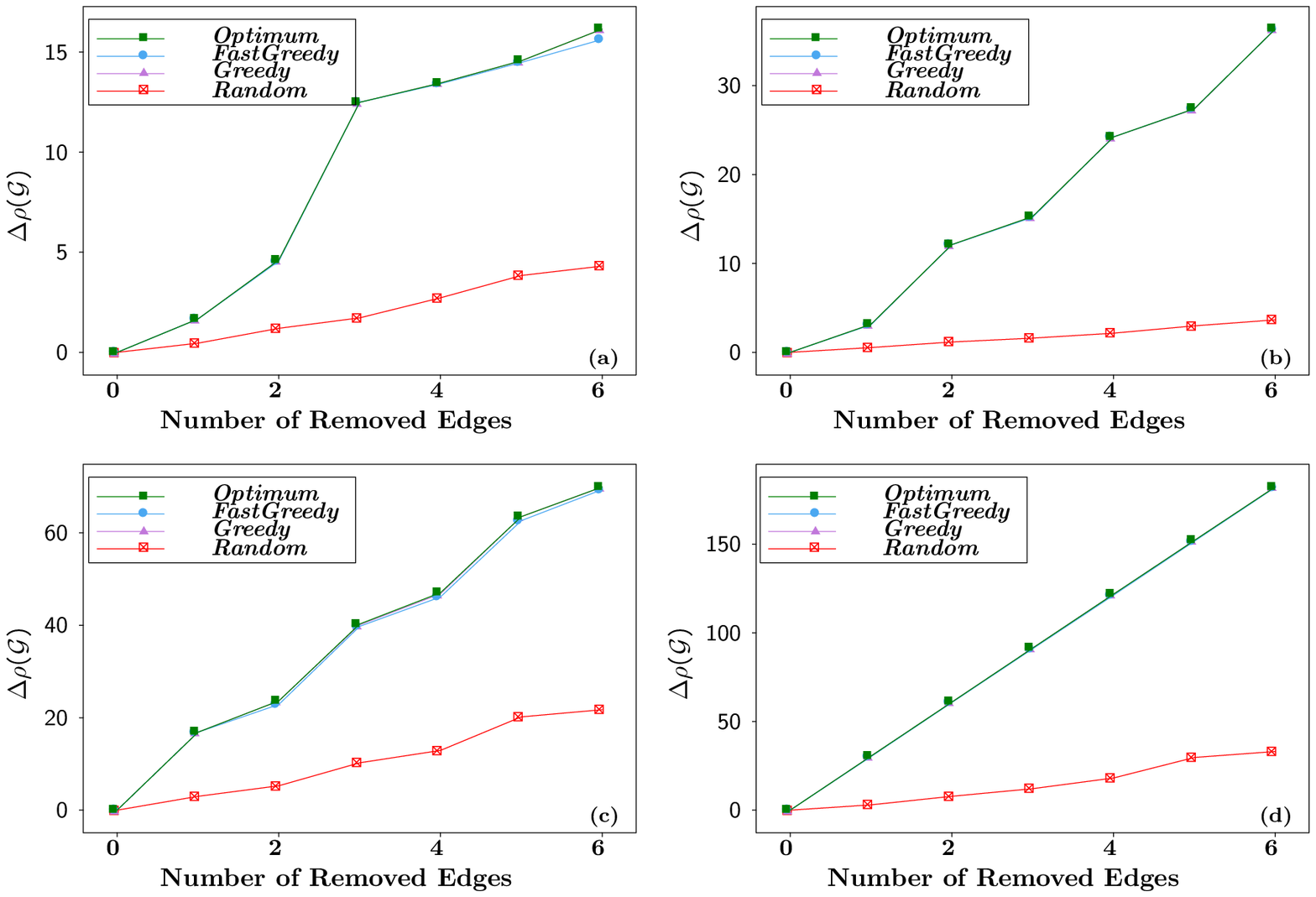}
	\caption{Increased forest index of four methods for edge attack on datasets (a) Tribes, (b) Southernwomen, (c) Karate and (d) Dolphins for increasing value of $k$.  }\label{ComOpt1}
\end{figure}

\begin{figure}
	\centering
	\includegraphics[width=\linewidth]{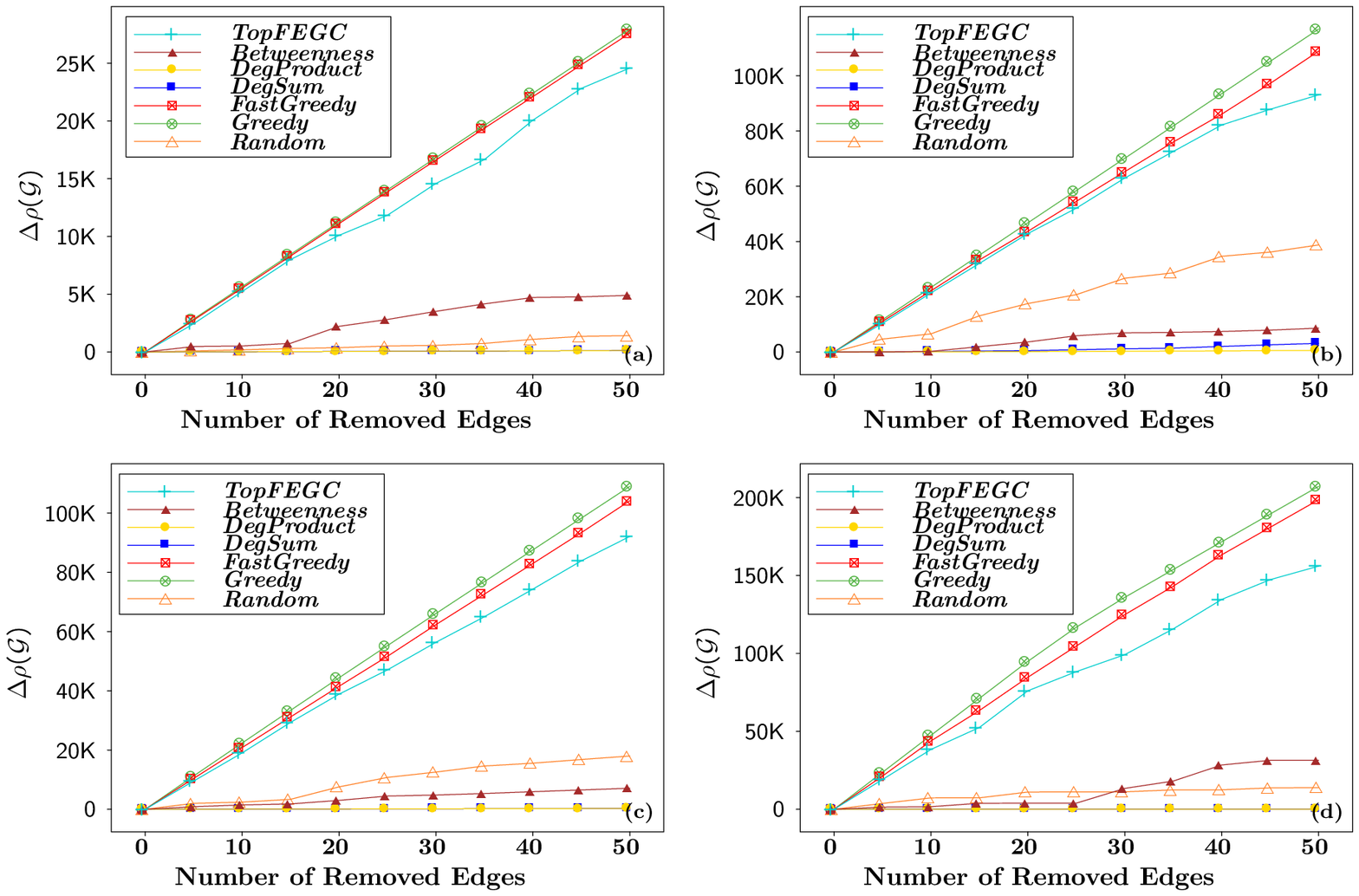}
	\caption{Increased forest index for seven methods of edge attack on datasets: (a) EmailUniv, (b) GridWorm, (c) GrQc and (d) WikiElec for increasing value of $k$.}\label{ComOpt2}
\end{figure}

\begin{figure}
	\centering
	\includegraphics[width=\linewidth]{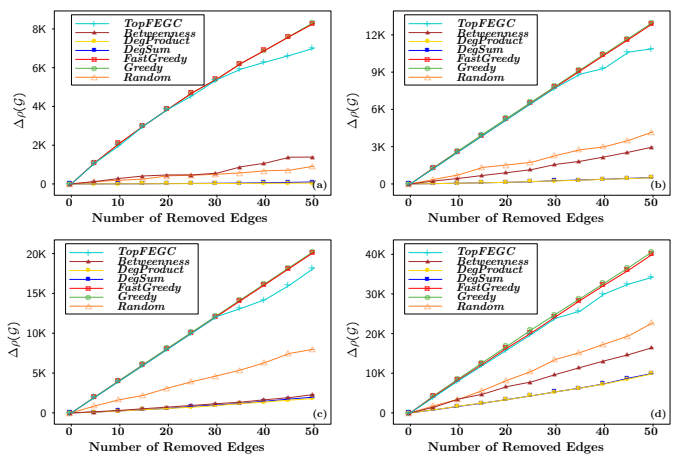}
	\caption{Increased forest index for seven methods of edge attack on datasets: (a) Crime-moreno, (b) Bio-diseasome, (c) Bio-celegans and (d) Inf-euroroad for increasing value of $k$.}\label{ComOpt3}
\end{figure}

In order to further show the effectiveness of our algorithms,  we proceed to  compare the results of our methods against the baseline schemes on eight relatively larger real-world networks: EmailUniv with $1133$ nodes and $5451$ edges, GridWorm with $3507$ nodes and $6531$ edges, GrQc with $4158$ nodes and $13422$ edges, WikiElec with $7118$ nodes and $100753$ edges, Crime-moreno with $829$ nodes and $1476$ edges, Bio-diseasome with $516$ nodes and $1188$ edges, Bio-celegans with $453$ nodes and $2025$ edges, and Inf-euroroad with $1417$ nodes and $1174$ edges.  These networks are too large to get the optimal solutions by brute-force search. For each network,  the performance of different methods for increasing value of $k$ are  displayed in  Figure~\ref{ComOpt2} and Figure~\ref{ComOpt3}. The results for these networks are impressive; we can see that, firstly, $\SpGreedy$ achieves the best performance as expected, and the proposed $\FastGreedy$  is very close to the $\SpGreedy$ method. Secondly, $\SpGreedy$ and $\FastGreedy$ consistently outperform all five
baselines and competitors, especially after the deletion of a few edges, as our two algorithms obtain much larger gain with fewer deletion, i.e., the forest index increases faster requiring less edge modifications. In addition, though TopFEGC performs close to FastGreedy for small values of $k$, the difference between \emph{TopFEGC} and \emph{FastGreedy} grows larger when $k$ increases. Finally, \emph{Betweenness}, \emph{DegProduct} and \emph{DegSum} grow flat after a certain number of edge deletion.

\subsection{Efficiency Comparison}
Although both our greedy algorithms $\SpGreedy$ and $\FastGreedy$ achieve remarkable performance in effectiveness,  we will show that $\FastGreedy$ runs much faster than $\SpGreedy$.  To this end,  we compare the efficiency of the two greedy algorithms on  a lager set of real-world networks.  For each network,  we select $k = 50$  edges  to maximize the forest index according to Algorithms~\ref{alg:SpG} and~\ref{alg:Appro}.  In Table~\ref{table:eff} we provide the results of running time and the increase of the forest index $\Delta\rho(G)$ returned by our two greedy algorithms. We observe that for moderate $\eps$, $\FastGreedy$ is significantly faster than $\SpGreedy$, with improvement becomes more significant when the graphs grow in size.  It is worth noting that $\SpGreedy$ is not applicable to the last ten networks marked with "$*$" due to the limitations of time and memory. In comparison with $\SpGreedy$, $\FastGreedy$ approximately computes $\Delta\rho(G)$ within several hours. Therefore, our algorithm $\FastGreedy$ achieves remarkable improvement in efficiency and is scalable to large networks with more than $10^6$ nodes.

We proceed to show both algorithms almost yield the same value on $\Delta\rho(G)$. We define $\beta$ and $\tilde{\beta}$ as the increase of the forest index after deleting edges selected, respectively, by $\SpGreedy$ and $\FastGreedy$. We use $\theta=|\tilde{\beta}-\beta|/\beta$ to denote relative error between $\tilde{\beta}$ and $\beta$. The results of relative
errors for different real networks and various parameter $\eps$ are
presented in Table~\ref{table:eff}. We observe that, for $\eps= 0.1, 0.2$ and $0.3$, these relative errors are negligible for all tested networks, with the largest value equal to  $5.61\%$. Thus, the results turns by $\FastGreedy$ are very close to those associated with $\SpGreedy$, implying that $\FastGreedy$ is both effective and efficient.

\section{Conclusion}

In this paper, we studied the measure and an optimization problem for network robustness in graphs, especially disconnected graphs. We first proposed to apply the forest index as a robustness measure for a connected or disconnected graph, which overcomes the weakness of existing measures. Based on the forest index, we then formulated a combinatorial optimization problem of attacking $k$ edges to maximally reduce the network robustness. We showed that the objective function of the problem is not submodular, although it is monotone. To solve the problem, we defined a novel centrality $C(S)$ of an edge set $S$, based on which we developed two approximation algorithms by deleting  $k$ edges in a greedy fashion.   The first algorithm has a proved bound of approximation factor and cubic-time complexity, while the second algorithm is nearly linear. Finally, we performed extensive experiments in various real-world networks, which show that our algorithms outperform several baseline strategies for deleting edges.

We note that our method for determining the theoretical bounds of the approximation ratio only applies to the case that the objective function is monotonic.  However, in some application scenarios, the objective function is not monotonic. As a topic for future research, it is interesting to extend or modify our method to a non-monotonic objective function. In addition, our obtained bounds for the approximation ratio depend on the eigenvalues of  graph Laplacian. Another future work is to establish a constant approximation ratio, which is independent of network parameters. Finally, it should be mentioned that although various edge-centrality based robustness measures have been compared in terms of their discriminating power~\cite{BaXuZh22}, there is no standard answer to the question---which robustness metric is prior to others, because it depends on specific applications. However, since the forest index, as a network robustness measure, can overcome those drawbacks of existing measures, we believe that it is a promising metric quantifying the robustness of a graph, especially disconnected graph.

\bibliographystyle{IEEEtran}
\bibliography{tifs}

\begin{IEEEbiography}{Liwang Zhu}
		received the B.Eng. degree in computer science and technology, Nanjing University of Science and Technology, Nanjing, China, in 2018. He is currently pursuing the Ph.D. degree in the School of Computer Science, Fudan University, Shanghai, China. His research interests include social networks, graph data mining, and network science.
\end{IEEEbiography}
	
\begin{IEEEbiography}{Qi Bao}
		received the B.Sc. degree in School of Computer Science, Fudan University, Shanghai, China, in 2017, where he is currently pursuing the Ph.D. degree.
		His research interests include graph algorithms, social networks, and network science.
\end{IEEEbiography}

\begin{IEEEbiography}{Zhongzhi Zhang}
	(M'19)	 received the B.Sc. degree in applied mathematics from Anhui University, Hefei, China, in 1997 and the Ph.D. degree in management science and engineering from Dalian University of Technology, Dalian, China, in 2006. \\
	From 2006 to 2008, he was a Post-Doctoral Research Fellow with Fudan University, Shanghai, China, where he is currently a Full Professor with the School of Computer Science. He has published over 160 papers in international journals or conferences. 
 He was selected as one of the most cited Chinese researchers
	(Elsevier) in 2019,  2020, 2021, and 2022. His current research interests include network science, graph data mining, social network analysis, computational social science, spectral graph theory, and random walks. \\
	Dr. Zhang was a recipient of the Excellent Doctoral Dissertation Award of Liaoning Province, China, in 2007, the Excellent Post-Doctor Award of Fudan University in 2008, the Shanghai Natural Science Award (third class) in 2013, the CCF Natural Science Award (second class) in 2022, and the Wilkes Award for the best paper published in The Computer Journal in 2019. He is a member of the IEEE.
\end{IEEEbiography}


\end{document}